\providecommand{\algorithmname}{Algorithm}
\theoremstyle{plain}
\newtheorem{prop}{\protect\propositionname}
\theoremstyle{plain}
\newtheorem{cor}{\protect\corollaryname}
\theoremstyle{remark}
\newtheorem{rem}{\protect\remarkname}
\theoremstyle{plain}
\newtheorem{lem}{\protect\lemmaname}
\theoremstyle{plain}
\newtheorem{thm}{\protect\theoremname}
\providecommand{\corollaryname}{Corollary}
\providecommand{\lemmaname}{Lemma}
\providecommand{\propositionname}{Proposition}
\providecommand{\remarkname}{Remark}
\providecommand{\theoremname}{Theorem}
\begin{document}

\title{Power Efficient IRS-Assisted NOMA}

\author{Jianyue Zhu, Yongming Huang, Jiaheng Wang, Keivan Navaie, Zhiguo
Ding\thanks{J. Zhu, Y. Huang, J. Wang are with the National Mobile Communications
Research Laboratory, Southeast University, Nanjing, China. (email:
\{zhujy, huangym, jhwang\}@seu.edu.cn).

K. Navaie is with the School of Computing and Communications, Lancaster
University, Lancaster, United Kingdom (email: k.navaie@lancaster.ac.uk). 

Z. Ding is with the School of Electrical and Electronic Engineering,
Manchester University, Manchester, UK (email: zhiguo.ding@manchester.ac.uk).}}
\maketitle
\begin{abstract}
In this paper, we propose a downlink multiple-input single-output
(MISO) transmission scheme, which is assisted by an intelligent reflecting
surface (IRS) consisting of a large number of passive reflecting elements.  In the literature, it has been
proved that nonorthogonal multiple access (NOMA) can achieve the capacity region when the channels
are quasi-degraded. However, in a conventional communication scenario, it is difficult to guarantee
the quasi-degradation, because the channels are determined
by the propagation environments and cannot be reconfigured. To overcome this difficulty, we
focus on an IRS-assisted MISO NOMA system, where the wireless channels can
be effectively tuned. We optimize the beamforming vectors and the IRS
phase shift matrix for minimizing  transmission power.  Furthermore, we propose
an improved quasi-degradation condition by using IRS, which can ensure that NOMA achieves the capacity region with
high possibility. For a comparison, we study zero-forcing beamforming (ZFBF) as well, where
the beamforming vectors and the IRS phase shift matrix are also jointly optimized.
Comparing NOMA with ZFBF, it is shown that, with the same IRS phase
shift matrix and the improved quasi-degradation condition, NOMA always
outperforms ZFBF. At the same time, we identify the condition under which ZFBF outperforms NOMA,  which motivates the proposed hybrid NOMA  transmission.   Simulation results show that the proposed IRS-assisted
MISO system outperforms the  MISO case without IRS,  and
the hybrid NOMA transmission scheme always achieves better performance than orthogonal
multiple access. 
\end{abstract}

\begin{IEEEkeywords}
Multiple-input single-output, nonorthogonal multiple access, intelligent
reflecting surface, zero-forcing beamforming, quasi-degradation
\end{IEEEkeywords}

\section{Introduction}

In the beyond fifth generation (B5G) communication systems, there are high
requirements in spectrum efficiency, energy consumption, and massive
connectivity \cite{saad2019vision,tariq2019speculative,8412482}.
In order to meet these high demands, various technologies, such as massive
multiple-input multiple-output (MIMO) \cite{guo2019convolutional},
millimeter wave \cite{xiao_millimeter_2017}, and small cell \cite{7835181}, 
are being investigated for the B5G communication systems. In addition,
nonorthogonal multiple access (NOMA) has also been introduced as a
promising multiple access candidate for future mobile networks \cite{dai2015non,surveynoma}.
Different from the conventional multiple access scheme, i.e., orthogonal
multiple access (OMA), NOMA allows multiple users sharing the same
resources, such as time, frequency, space, and code,  and hence significantly
improves the spectrum efficiency \cite{zhu_optimal_2017,7842433,ding2015cooperative}. 

In addition, being able to provide the flexibility and spatial degrees of
freedom needed for efficient resource allocation, multi-antenna techniques
have been widely considered in the B5G communication systems \cite{8654724,8352617}.
In the literature, there are many works focusing on the coexistence
of NOMA and multi-antenna techniques. For example, in \cite{ding2016application},
 the authors applied NOMA in MIMO systems and showed that MIMO-NOMA outperforms conventional MIMO-OMA. In \cite{z._chen_application_2016},
the authors proposed the quasi-degradation condition, under which NOMA can achieve the same performance as dirty paper coding
(DPC). This work also demonstrated that NOMA is not always preferable.
If users' channels are orthogonal, zero-forcing beamforming (ZFBF)
is a more preferred option than NOMA. On the other hand, the users' channels having the
same directions is the ideal case for the implementation of NOMA.

Conventionally, in NOMA systems, the directions of users' channels
cannot be tuned. This is because the users' channels are determined
by propagation environments and hence are highly stochastic.  Motivated by this, in this paper, we
apply the intelligent reflecting surface (IRS) to NOMA systems to reconfigure the channels. 

 The IRS, consisting of a large number of passive elements, is introduced as a promising technology for the B5G communication systems \cite{tang2018wireless,tang2019wireless}. The reflecting elements of the IRS  can affect the electromagnetic behavior of the wireless
propagation channel. Usually, the IRS operates as a multi-antenna relay
\cite{wu2018intelligent,huang2018energy} but it is fundamentally
different from a conventional relay. Specifically, the IRS functions as a reconfigurable
scatterer, which requires no dedicated energy resource for decoding,
channel estimation, and transmission. One of the key challenges in adopting IRS-assisted communication systems
is to design the IRS phase shift matrix \cite{huang2018energy,wu2018intelligent,wu2019beamforming,xu2019resource}.
In this paper, we  focus on the design of  beamforming vectors and the IRS phase shift matrix for an IRS-assisted multiple-input single-output
(MISO) NOMA communication system.

To the best of our knowledge, there is a limited number of works,
e.g., \cite{fu2019intelligent,ding2019simple,yang2019intelligent,mu2019exploiting},
investigating the application of IRS in NOMA systems. In \cite{fu2019intelligent},
the authors optimized the beamforming vectors and the IRS phase shift
matrix for minimizing downlink transmission power in the IRS-empowered
NOMA system. \cite{ding2019simple} proposed a simple design of IRS-assisted
NOMA transmission, where the authors characterized the performance
of practical IRS NOMA transmission. In \cite{yang2019intelligent},
the authors focused on the rate optimization problem for an IRS-assisted
NOMA system, where the power and the IRS phase shift matrix were optimized.
\cite{mu2019exploiting} maximized the sum rate of all users by jointly
optimizing the active beamforming at the BS and the passive beamforming
at the IRS, where the ideal and non-ideal IRS assumptions are both
studied. 

Different from all the above mentioned previous works, in this paper,
the advantage of the employment of IRS in NOMA systems is demonstrated
and the best achievable performance achieved by using IRS-assisted
NOMA is obtained. We optimize the beamforming vectors and the IRS
phase shift matrix to minimize the transmission power. Particularly,
for the formulated optimization problem, we consider a quasi-degradation
constraint to guarantee that NOMA achieves the same performance as
DPC \cite{z._chen_application_2016,jianyue}, which can realize the
capacity region \cite{weingarten_capacity_2006}. Conventionally,
whether the quasi-degradation condition can be satisfied depends on
the predetermined channels, which means NOMA might not always achieve
the capacity region. In addition, DPC is difficult to be implemented
in practical communication systems, due to its prohibitively high
complexity \cite{weingarten_capacity_2006}. By employing the IRS-assisted
NOMA, we design a more practical transmission scheme, which achieves
the same performance as DPC. We further investigate optimizing beamforming
vectors and the IRS phase shift matrix for IRS-assisted zero-forcing
beamforming (ZFBF) scheme and show that it is preferred to IRS-assisted
NOMA in certain condition.

In this paper, we study the joint optimization of beamforming vectors
and the IRS element matrix respectively for IRS-assisted NOMA and IRS-assisted
ZFBF and focus on the fundamental two users' case. The contributions
are summarized as follows. 
\begin{itemize}
\item In IRS-assisted NOMA systems, we study the optimization of beamforming
vectors and the IRS phase shift matrix for minimizing the BS transmission
power. Particularly, the quasi-degradation is considered as a constraint
in the optimization problem, which guarantees that NOMA can achieve
the same performance as DPC.
\item  For IRS-assisted NOMA, we provide an improved quasi-degradation condition
under which the feasibility of the optimization problem is guaranteed and the same
performance as DPC is obtained. 
\item We show that the improved quasi-degradation condition for IRS-assisted
NOMA can be satisfied with a higher possibility than conventional
NOMA without IRS.
\item The optimal beamforming solutions with given IRS phase shift matrix
are provided according to  our previous papers \cite{jianyue} and \cite{z._chen_application_2016}.
Then, with the optimal beamforming, we further optimize the IRS phase
shift matrix using semidefinite relaxation (SDR) and quadratic transform.
\item For IRS-assisted ZFBF, we also characterize the optimal beamforming
in a closed form with a given IRS phase shift matrix. The IRS phase
shift matrix is further optimized using fractional programing and
successive convex approximation (SCA).
\item Comparing IRS-assisted NOMA with IRS-assisted ZFBF, we show that,
given the same IRS phase shift matrix and the improved quasi-degradation
condition, NOMA achieves better performance than that of  ZFBF. However, ZFBF outperforms NOMA under certain condition and hence we provide the hybrid NOMA transmission scheme. 
\end{itemize}

The rest of the paper is organized as follows. In Section \ref{sec:System-Model},
we describe the system model and formulate the minimizing transmission
power problem. In Section \ref{sec:Beamforming-and-IRS}, we 
analyze the feasibility of the formulated problem in IRS-assisted NOMA systems,  and then the beamforming vectors and the IRS phase shift
matrix are optimized. In Section \ref{sec:Beamforming-and-IRS-1},
we design the beamforming vectors and the IRS phase shift matrix for
IRS-assisted ZFBF. In Section \ref{sec:The-Comparison-between}, we compare the
schemes of IRS-assisted NOMA and IRS-assisted ZFBF and propose the hybrid NOMA transmission scheme. Finally,
in Section \ref{sec:Simulation-Results}, we present simulation results followed by conclusion in Section \ref{sec:Conclusion}.

\textbf{\textcolor{black}{$\mathbf{\mathit{Notations}}$}}: The matrices
and vectors are respectively denoted as boldface capital and lower
case letters. Tr($\boldsymbol{A}$) and Rank($\boldsymbol{A}$) represent
the trace and rank of matrix $\boldsymbol{A}$, respectively; $\boldsymbol{a}^{H}$
is the Hermitian transpose vector $\boldsymbol{a}$; $\boldsymbol{A}\succeq\boldsymbol{0}$
indicates that $\boldsymbol{A}$ is a positive semidefinite matrix;
$\boldsymbol{I}_{N}$ is the $N\times N$ identity matrix; $\mathbb{C}^{T}$
denotes the set of complex numbers; $\left\Vert .\right\Vert $ denotes
the absolute value of the Euclidean vector norm.

\section{\label{sec:System-Model} System Model and Problem Formulation}

\subsection{System Model}

As shown in Fig. 1, we consider the downlink MISO system wherein a
base station (BS) equipped with $M$ antennas serves two single-antenna
users. In this MISO system, the communication is assisted by an IRS with $N$ phase shift elements.
Let $s_{i}$ be the message intended to be received by user $i$ with
$E\left[\left|s_{i}\right|^{2}\right]=1$ and $\boldsymbol{w}_{i}\in\mathbb{C}^{M}$
be the complex beamforming vector for user $i$. In this IRS-assisted
MISO system, we consider that the BS exploits superposition coding
and hence the received signal at each user is 
\begin{equation}
y_{k}=\boldsymbol{h}_{k}^{H}\left(\boldsymbol{w}_{1}s_{1}+\boldsymbol{w}_{2}s_{2}\right)+n_{k},k=1,2,
\end{equation}
where $n_{k}\sim\mathcal{CN}\left(0,\sigma^{2}\right)$ is the additive zero average 
white Gaussian noise (AWGN) with variance of $\sigma^{2}$ , and
\begin{equation}
\boldsymbol{h}_{k}^{H}=\mathbf{h}_{rk}^{H}\boldsymbol{\Theta}\mathbf{G}+\mathbf{h}_{dk}^{H}\in\mathbb{C}^{1\times M},\label{channel of user k}
\end{equation}
is the channel of user $k$. In \eqref{channel of user k}, $\mathbf{h}_{rk}^{H}\in\mathbb{C}^{1\times N}$
is the channel between the IRS and user $k$, $\mathbf{G}\in\mathbb{C}^{N\times M}$
denotes the channel between the BS and the IRS, and $\mathbf{h}_{dk}\in\mathbb{C}^{1\times M}$
represents the direct channel between the BS and user $k$; $\boldsymbol{\Theta}=\textrm{diag}\left(e^{j\theta_{1}},\cdots,e^{j\theta_{N}}\right)$
is the IRS phase shift matrix, where $j$ denotes the imaginary unit
and $\theta_{n}\in[0,2\pi]$ is the phase shift.

\begin{figure}
\centering
\includegraphics[scale=0.7]{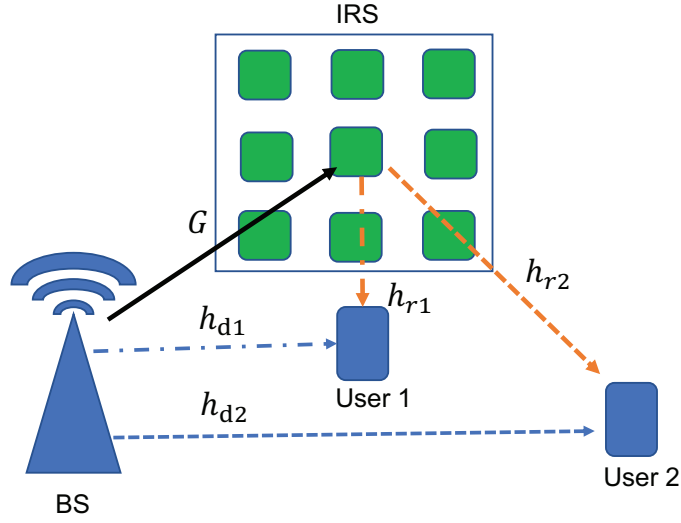}\caption{A downlink IRS-assisted MISO system.}
\end{figure}

\subsection{Problem Formulation}

\subsubsection{NOMA Transmission Scheme}

 It has been shown that with quasi-degraded
channels, the same performance as DPC, i.e., the capacity region of
MISO broadcast channels \cite{weingarten_capacity_2006}, can be achieved
by using NOMA \cite{chen_optimal_2016,z._chen_application_2016,jianyue}.
Considering the prohibitively high complexity, DPC is difficult
to be implemented in practical communication systems. Thus, we focus
on designing a more practical downlink transmission scheme, i.e.,
NOMA, which can outperform OMA and simultaneously yield a performance
close to the capacity region. 

It has been shown in \cite {z._chen_application_2016,jianyue,chen_optimal_2016} that whether the quasi-degradation condition can be
satisfied dependents on the channels between the BS and users, which
cannot be controlled. However, in this paper, by using
the IRS, the channels of users can be efficiently tuned and we might
find the IRS phase shift matrix, i.e., $\boldsymbol{\Theta}$, to
guarantee the quasi-degradation. Thus, considering the quasi-degradation
constraint with variable $\boldsymbol{\Theta}$ , the same performance
as DPC is achieved by using NOMA. 

In NOMA systems, the successive interference cancellation (SIC) is
employed for the users to decode their signals. In this paper, we
assume a fixed decoding order, (2, 1). Specifically, user 1 first decodes
the signal of user 2 and subtracts this from its received signal,
which means user 1 can decode its signal without interference. For
user 2, it does not perform SIC and simply decodes its signal by treating
user 1's signal as interference. Therefore, the achievable rate of
these two users are respectively given by
\begin{equation}
\begin{cases}
R_{1}=\ln\left(1+\frac{\boldsymbol{h}_{1}^{H}\boldsymbol{w}_{1}\boldsymbol{w}_{1}^{H}\boldsymbol{h}_{1}}{\sigma^{2}}\right),\\
R_{2}=\min\left\{ \ln\left(1+\textrm{SINR}_{2,1}\right),\ln\left(1+\textrm{SINR}_{2,2}\right)\right\} ,
\end{cases}
\end{equation}
where
\begin{equation}
\textrm{SINR}_{2,1}=\frac{\boldsymbol{h}_{1}^{H}\boldsymbol{w}_{2}\boldsymbol{w}_{2}^{H}\boldsymbol{h}_{1}}{\boldsymbol{h}_{1}^{H}\boldsymbol{w}_{1}\boldsymbol{w}_{1}^{H}\boldsymbol{h}_{1}+\sigma^{2}},
\end{equation}
\begin{equation}
\textrm{SINR}_{2,2}=\frac{\boldsymbol{h}_{2}^{H}\boldsymbol{w}_{2}\boldsymbol{w}_{2}^{H}\boldsymbol{h}_{2}}{\boldsymbol{h}_{2}^{H}\boldsymbol{w}_{1}\boldsymbol{w}_{1}^{H}\boldsymbol{h}_{2}+\sigma^{2}},
\end{equation}
respectively denote the signal interference noise ratio (SINR) of
user $1$ to decode user $2$ and user $2$ to decode itself. 

The performance of the IRS-assisted MISO NOMA system relies on the
design of beamforming vectors and the IRS phase shift matrix. In this
paper, we focus on the joint optimization of beamforming vectors and the
IRS phase matrix in  IRS-assisted MISO NOMA systems. We further 
 assume that the perfect channel state information (CSI) is available
at all nodes. In the formulated optimization problem, we consider
minimizing the transmission power (MTP) subject to the quality of service
(QoS) constraints of the two users. In addition,  we incorporate an additional
important constraint, i.e., the quasi-degradation constraint. This constraint
ensures that the NOMA scheme can achieve the same performance as DPC.
 The optimization problem is thus formulated as:
 \begin{subequations}
\begin{align}
 & P^{NOMA}=\underset{\left\{ \boldsymbol{w}_{1},\boldsymbol{w}_{2},\boldsymbol{\Theta}\right\} }{\min}\left\Vert \boldsymbol{w}_{1}\right\Vert ^{2}+\left\Vert \boldsymbol{w}_{2}\right\Vert ^{2},\label{P1-NOMA}\\
\textrm{s.t.} & ~\frac{\boldsymbol{h}_{1}^{H}\boldsymbol{w}_{1}\boldsymbol{w}_{1}^{H}\boldsymbol{h}_{1}}{\sigma^{2}}\geq r_{1}^{\min},\label{user_1_qos}\\
 & ~\min\left\{ \frac{\boldsymbol{h}_{1}^{H}\boldsymbol{w}_{2}\boldsymbol{w}_{2}^{H}\boldsymbol{h}_{1}}{\boldsymbol{h}_{1}^{H}\boldsymbol{w}_{1}\boldsymbol{w}_{1}^{H}\boldsymbol{h}_{1}+\sigma^{2}},\frac{\boldsymbol{h}_{2}^{H}\boldsymbol{w}_{2}\boldsymbol{w}_{2}^{H}\boldsymbol{h}_{2}}{\boldsymbol{h}_{2}^{H}\boldsymbol{w}_{1}\boldsymbol{w}_{1}^{H}\boldsymbol{h}_{2}+\sigma^{2}}\right\} \geq r_{2}^{\min},\label{user 2_qos}\\
 & ~\frac{1+r_{1}^{\min}}{\cos^{2}\alpha}-\frac{r_{1}^{\min}\cos^{2}\alpha}{\left(1+r_{2}^{\min}\left(1-\cos^{2}\alpha\right)\right)^{2}}\leq\frac{\left\Vert \boldsymbol{h}_{1}\right\Vert ^{2}}{\left\Vert \boldsymbol{h}_{2}\right\Vert ^{2}},\label{quasi-degradation}\\
 & ~0\leq\theta_{i}\leq2\pi,i=1,\cdots,N,\label{theta}
\end{align}
  \end{subequations}
where, $r_{i}^{\min}$, is the corresponding target signal-to-noise
ratio (SNR) level, i.e., $\ln\left(1+r_{i}^{\min}\right)=R_{i}^{\min}$,
$i=1,2$, and $\alpha$ is the angel between the two users' channels
with
\begin{equation}
\cos^{2}\alpha=\frac{\boldsymbol{h}_{1}^{H}\boldsymbol{h}_{2}\boldsymbol{h}_{2}^{H}\boldsymbol{h}_{1}}{\left\Vert \boldsymbol{h}_{1}\right\Vert ^{2}\left\Vert \boldsymbol{h}_{2}\right\Vert ^{2}}.
\end{equation}
In problem \eqref{P1-NOMA}, \eqref{user_1_qos} and \eqref{user 2_qos}
are respectively the QoS constraints of user 1 and user 2 and \eqref{theta}
is due to the IRS phase shift matrix, i.e., $\boldsymbol{\Theta}=\textrm{diag}\left(e^{j\theta_{1}},\cdots,e^{j\theta_{N}}\right)$.
In addition, \eqref{quasi-degradation} is the quasi-degradation constraint,
which was proposed in \cite{z._chen_application_2016,jianyue}. 

To the best of  our knowledge, there are very limited works
investigating the IRS-assisted NOMA system \cite{yang2019intelligent,ding2019simple,fu2019intelligent}.
The similar work in \cite{fu2019intelligent} also studied the optimization
of beamforming vectors and the IRS phase shift matrix for MTP. However,
\cite{fu2019intelligent} considered the optimization problem without
the quasi-degradation constraint, which means the proposed scheme might not achieve
the best performance using NOMA. In contrast here, we fully exploit the
advantage of IRS, i.e., the directions of users\textquoteright{} channel
can be effectively tuned to obtain the same performance as DPC.

\subsubsection{ZFBF Transmission Scheme}
Here, we benchmark our proposed IRS-assisted NOMA system against an IRS-assisted ZFBF system. ZFBF is a widely used transmission scheme in the literature. In the ZFBF strategy, the users transmit data in the null space of other users\textquoteright{}
channels, which mitigates the multi-user interference. If
the channels of the two users are orthogonal, the user interference
can be avoided by using ZFBF. Thus, we can achieve the best performance
using ZFBF if the channels of users are orthogonal. However, the
channels are not always orthogonal. In this paper, we investigate
the IRS-assisted ZFBF transmission scheme, where the IRS phase shift
matrix can be obtained to make the channels
orthogonal under certain condition.

In the IRS-assisted MISO ZFBF system, the MTP problem with QoS
constraints is formulated as:
\begin {subequations}
\begin{align}
 & P^{ZF}=\underset{\left\{ \boldsymbol{w}_{1},\boldsymbol{w}_{2},\boldsymbol{\Theta}\right\} }{\min}\left\Vert \boldsymbol{w}_{1}\right\Vert ^{2}+\left\Vert \boldsymbol{w}_{2}\right\Vert ^{2},\label{P1-ZF}\\
\textrm{s.t.}~ & \frac{\boldsymbol{h}_{1}^{H}\boldsymbol{w}_{1}\boldsymbol{w}_{1}^{H}\boldsymbol{h}_{1}}{\sigma^{2}}\geq r_{1}^{\min},\label{user 1_qos_zf}\\
 & \frac{\boldsymbol{h}_{2}^{H}\boldsymbol{w}_{2}\boldsymbol{w}_{2}^{H}\boldsymbol{h}_{2}}{\sigma^{2}}\geq r_{2}^{\min},\label{user 2_qos_zf}\\
 & \boldsymbol{h}_{1}^{H}\boldsymbol{w}_{2}=0,~\boldsymbol{h}_{2}^{H}\boldsymbol{w}_{1}=0,\label{interference avoidance}\\
 & 0\leq\theta_{i}\leq2\pi,i=1,\cdots,N,
\end{align}
\end {subequations}
where \eqref{user 1_qos_zf} and \eqref{user 2_qos_zf} are respectively
the QoS constraints of user 1 and user 2. In addition, the constraint
\eqref{interference avoidance} is due to the ZFBF principle, i.e.,
the users transmit data in the null space of other users\textquoteright{}
channels.

In previous works on IRS, e.g., \cite{ding2019simple,ZFBF-power},
also used the ZFBF beamforming.  \cite{ding2019simple} employed the IRS to make ZFBF an ideal beamforming for maximizing weak user's rate.  Also using ZFBF,  \cite{ZFBF-power} studied the joint optimization of power and the IRS phase shifting matrix for maximizing energy efficiency. In this
paper, we focus on the beamforming and the IRS phase shift matrix design to minimize the transmission power. Moreover, in the following, we will
show that, under certain condition, the IRS can be used to make the channels
orthogonal, which is the ideal case for ZFBF. 

\section{\label{sec:Beamforming-and-IRS}Beamforming and IRS Phase Shift 
Design: IRS-Assisted NOMA}

In this section, we first investigate the feasibility of the formulated
problem \eqref{P1-NOMA}. It is worthy pointing out that only if problem
\eqref{P1-NOMA} is feasible, the NOMA transmission can lead to the
same performance as DPC. Then, given the proposed feasibility condition,
we focus on the joint optimization of beamforming vectors and the IRS
phase shift matrix. Specifically, we first optimize the beamforming
vectors with given IRS phase shift matrix, $\boldsymbol{\Theta}$, and
then  $\boldsymbol{\Theta}$ is optimized. 

\subsection{\label{subsec:The-feasibility-of}The Feasibility of the Quasi-degradation
Constraint}

Different from the MISO NOMA system without IRS, the IRS-assisted
MISO NOMA system is able to adjust the angle between the channels
of user 1 and user 2 to satisfy the quasi-degradation condition. In
the following, we show that, compared to the literature MISO NOMA
system without IRS, the IRS-assisted MISO NOMA system has a greater possibility to
satisfy the quasi-degradation condition. Here, we derive a sufficient
condition to guarantee the feasibility of the quasi-degradation constraint.
\begin{prop}
\label{P:the feasibility-los}The quasi-degradation constraint \eqref{quasi-degradation}
is feasible if 
\begin{equation}
\lambda_{\max}\left(\boldsymbol{\Upsilon}_{1}-\boldsymbol{\Upsilon}_{2}\right)\geq0\label{sufficiency_quasi-degradation-los},
\end{equation}
where 
\begin{equation}
\boldsymbol{\Upsilon}_{k}=\left[\begin{array}{cc}
\boldsymbol{\Phi}_{k}\boldsymbol{\Phi}_{k}^{H} & \boldsymbol{\Phi}_{k}\mathbf{h}_{dk}\\
\mathbf{h}_{dk}^{H}\boldsymbol{\Phi}_{k}^{H} & \mathbf{h}_{dk}^{H}\mathbf{h}_{dk}
\end{array}\right],\label{gama_k}
\end{equation}
 $\boldsymbol{\Phi}_{k}=\textrm{diag}\left(\mathbf{h}_{rk}\right)\mathbf{G}$
for $k=1,2$, and $\lambda_{\max}\left(\boldsymbol{\Upsilon}_{1}-\boldsymbol{\Upsilon}_{2}\right)$
denotes the maximum eigenvalue of matrix $\boldsymbol{\Upsilon}_{1}-\boldsymbol{\Upsilon}_{2}$. 
\end{prop}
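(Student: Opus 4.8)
The plan is to recast the phase-dependent quasi-degradation inequality as a quadratic form in the reflection coefficients and then extract a spectral solvability condition. First I would stack the reflection coefficients into an augmented unit-modulus vector $\hat{\boldsymbol{v}}=\left[e^{-j\theta_{1}},\dots,e^{-j\theta_{N}},1\right]^{T}$, the trailing $1$ absorbing the direct path and fixing the irrelevant global phase. Substituting $\boldsymbol{h}_{k}^{H}=\mathbf{h}_{rk}^{H}\boldsymbol{\Theta}\mathbf{G}+\mathbf{h}_{dk}^{H}$ and using $\boldsymbol{\Phi}_{k}=\textrm{diag}\left(\mathbf{h}_{rk}\right)\mathbf{G}$, a direct expansion of $\left\Vert \boldsymbol{h}_{k}\right\Vert ^{2}=\boldsymbol{h}_{k}^{H}\boldsymbol{h}_{k}$ reproduces exactly the block partition in \eqref{gama_k}, giving the identity $\left\Vert \boldsymbol{h}_{k}\right\Vert ^{2}=\hat{\boldsymbol{v}}^{H}\boldsymbol{\Upsilon}_{k}\hat{\boldsymbol{v}}$ and hence $\left\Vert \boldsymbol{h}_{1}\right\Vert ^{2}-\left\Vert \boldsymbol{h}_{2}\right\Vert ^{2}=\hat{\boldsymbol{v}}^{H}\left(\boldsymbol{\Upsilon}_{1}-\boldsymbol{\Upsilon}_{2}\right)\hat{\boldsymbol{v}}$. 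This quadratic reformulation is the pivot of the argument; verifying it is routine once the conjugation convention is fixed.

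Next I would reduce the quasi-degradation constraint \eqref{quasi-degradation} to this norm difference. Writing $c=\cos^{2}\alpha\in\left(0,1\right]$ and letting $g\left(c\right)$ denote the left-hand side of \eqref{quasi-degradation}, I would show that $g$ is monotonically decreasing on $\left(0,1\right]$ with $g\left(1\right)=1$: the term $\frac{1+r_{1}^{\min}}{c}$ decreases in $c$, while the subtracted term $\frac{r_{1}^{\min}c}{\left(1+r_{2}^{\min}\left(1-c\right)\right)^{2}}$ increases in $c$, and at $c=1$ the two contributions cancel to $1$. Therefore $g\left(c\right)\geq1$ for all admissible angles, with equality only when the two channels are aligned. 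Consequently, \eqref{quasi-degradation} is guaranteed as soon as one can steer $\boldsymbol{\Theta}$ so that $\cos^{2}\alpha\to1$ while keeping $\left\Vert \boldsymbol{h}_{1}\right\Vert ^{2}\geq\left\Vert \boldsymbol{h}_{2}\right\Vert ^{2}$: in this aligned regime the constraint collapses precisely to $\hat{\boldsymbol{v}}^{H}\left(\boldsymbol{\Upsilon}_{1}-\boldsymbol{\Upsilon}_{2}\right)\hat{\boldsymbol{v}}\geq0$.

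It then remains to certify that $\hat{\boldsymbol{v}}^{H}\left(\boldsymbol{\Upsilon}_{1}-\boldsymbol{\Upsilon}_{2}\right)\hat{\boldsymbol{v}}\geq0$ is achievable over admissible phases. For this I would invoke a Rayleigh-quotient relaxation: dropping the per-element unit-modulus restriction and retaining only the aggregate normalization $\left\Vert \hat{\boldsymbol{v}}\right\Vert ^{2}=N+1$, the maximum of the quadratic form equals $\left(N+1\right)\lambda_{\max}\left(\boldsymbol{\Upsilon}_{1}-\boldsymbol{\Upsilon}_{2}\right)$, attained along the principal eigenvector. The sign of this maximum is controlled exactly by $\lambda_{\max}\left(\boldsymbol{\Upsilon}_{1}-\boldsymbol{\Upsilon}_{2}\right)$, so condition \eqref{sufficiency_quasi-degradation-los} is the natural certificate that a favorable $\boldsymbol{\Theta}$ exists.

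The hard part will be this final step. Passing from the unconstrained principal eigenvector back to a genuine phase profile $\left\{ \theta_{i}\right\} $ with unit-modulus entries is exactly the semidefinite-relaxation gap, and one must simultaneously ensure that the angle $\alpha$ can be driven toward alignment at the same $\boldsymbol{\Theta}$ that realizes $\left\Vert \boldsymbol{h}_{1}\right\Vert ^{2}\geq\left\Vert \boldsymbol{h}_{2}\right\Vert ^{2}$. Closing it rigorously would require either a Gaussian-randomization / rank-one recovery argument or an explicit feasible construction; absent that, \eqref{sufficiency_quasi-degradation-los} should be read as the relaxed spectral sufficient condition implied by the quadratic-form reformulation rather than as a literal equivalence.
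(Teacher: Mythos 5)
Your quadratic-form identity $\left\Vert \boldsymbol{h}_{k}\right\Vert ^{2}=\hat{\boldsymbol{v}}^{H}\boldsymbol{\Upsilon}_{k}\hat{\boldsymbol{v}}$ and the monotonicity of the left-hand side of \eqref{quasi-degradation} in $\cos^{2}\alpha$ are both correct and both appear in the paper. But your overall route is forward/constructive where the paper's is contrapositive, and this leaves a genuine gap that your own final paragraph only half-identifies. Your reduction makes \eqref{quasi-degradation} collapse to $\hat{\boldsymbol{v}}^{H}\left(\boldsymbol{\Upsilon}_{1}-\boldsymbol{\Upsilon}_{2}\right)\hat{\boldsymbol{v}}\geq0$ \emph{only in the aligned regime} $\cos^{2}\alpha\to1$, so to finish you must exhibit a single unit-modulus phase profile that simultaneously (i) aligns $\boldsymbol{h}_{1}$ and $\boldsymbol{h}_{2}$ and (ii) realizes the nonnegative quadratic form. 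Condition \eqref{sufficiency_quasi-degradation-los} says nothing about the angle between the channels; the principal eigenvector of $\boldsymbol{\Upsilon}_{1}-\boldsymbol{\Upsilon}_{2}$ has no reason to align the two channels, and alignment is generically unattainable over the admissible $\boldsymbol{\Theta}$. On top of that, even the weaker step you flag --- recovering a genuine unit-modulus $\hat{\boldsymbol{v}}$ with nonnegative quadratic form from $\lambda_{\max}\geq0$ --- does not follow from the Rayleigh-quotient relaxation (e.g.\ randomization only controls the \emph{average} $\textrm{Tr}\left(\boldsymbol{\Upsilon}_{1}-\boldsymbol{\Upsilon}_{2}\right)$, which can be negative while $\lambda_{\max}\geq0$). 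So the implication ``\eqref{sufficiency_quasi-degradation-los} $\Rightarrow$ \eqref{quasi-degradation} feasible'' is not established by your argument.

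The paper avoids all of this by arguing in the converse direction: assume \eqref{quasi-degradation} fails for \emph{every} $\boldsymbol{\Theta}$; then, lower-bounding the left-hand side by $\frac{1+r_{1}^{\min}}{\cos^{2}\alpha}-r_{1}^{\min}\cos^{2}\alpha$ (using $0\leq\cos^{2}\alpha\leq1$) and applying Cauchy--Schwarz to eliminate $\cos^{2}\alpha$ entirely, one deduces $\left\Vert \boldsymbol{h}_{2}\right\Vert ^{2}>\left\Vert \boldsymbol{h}_{1}\right\Vert ^{2}$, i.e.\ $\tilde{\boldsymbol{v}}^{H}\left(\boldsymbol{\Upsilon}_{2}-\boldsymbol{\Upsilon}_{1}\right)\tilde{\boldsymbol{v}}>0$, for every admissible $\tilde{\boldsymbol{v}}$, from which the paper concludes $\lambda_{\max}\left(\boldsymbol{\Upsilon}_{1}-\boldsymbol{\Upsilon}_{2}\right)<0$; the proposition is the contrapositive. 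No feasible phase vector is ever constructed and no channel alignment is needed. If you want to salvage your approach, you would have to either reproduce this contrapositive chain or supply the missing joint construction; as written, your conclusion that \eqref{sufficiency_quasi-degradation-los} is only a ``relaxed spectral condition rather than a literal sufficient condition'' concedes that the stated proposition has not been proved.
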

\begin{proof}
See Appendix A. 
\end{proof}
According to Proposition \ref{P:the feasibility-los}, given the improved
quasi-degradation condition \eqref{sufficiency_quasi-degradation-los},
the IRS phase shift matrix can always be found to satisfy the quasi-degradation
constraint \eqref{quasi-degradation}. Thus, under condition \eqref{sufficiency_quasi-degradation-los},
the proposed MTP problem in \eqref{P1-NOMA} using NOMA is always feasible
and the IRS-assisted MISO NOMA scheme can obtain the same performance
as DPC. 
\begin{cor}
\label{C-q}Suppose $\boldsymbol{h}_{1}^{H}\boldsymbol{h}_{2}\neq0$,
the quasi-degradation condition, \eqref{quasi-degradation}, is always satisfied
if the IRS is located at the position of user 1.
\end{cor}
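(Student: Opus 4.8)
The plan is to obtain Corollary \ref{C-q} as an immediate consequence of Proposition \ref{P:the feasibility-los}: it suffices to verify that placing the IRS at the location of user 1 forces the sufficient condition \eqref{sufficiency_quasi-degradation-los}, i.e.\ $\lambda_{\max}(\boldsymbol{\Upsilon}_1-\boldsymbol{\Upsilon}_2)\geq 0$, to hold. Since $\boldsymbol{\Upsilon}_1-\boldsymbol{\Upsilon}_2$ is Hermitian, its largest eigenvalue dominates every diagonal entry, $\lambda_{\max}(\boldsymbol{\Upsilon}_1-\boldsymbol{\Upsilon}_2)\geq \boldsymbol{e}_n^{H}(\boldsymbol{\Upsilon}_1-\boldsymbol{\Upsilon}_2)\boldsymbol{e}_n$ for each coordinate vector $\boldsymbol{e}_n$, so it is enough to exhibit a single nonnegative diagonal entry.

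First I would evaluate the diagonal of the leading $N\times N$ block. Writing $\boldsymbol{\Phi}_k=\textrm{diag}(\mathbf{h}_{rk})\mathbf{G}$, the $n$-th row of $\boldsymbol{\Phi}_k$ is $h_{rk,n}\mathbf{g}_n$, where $h_{rk,n}$ is the $n$-th entry of $\mathbf{h}_{rk}$ and $\mathbf{g}_n$ is the $n$-th row of $\mathbf{G}$; hence $[\boldsymbol{\Phi}_k\boldsymbol{\Phi}_k^{H}]_{nn}=|h_{rk,n}|^{2}\,\|\mathbf{g}_n\|^{2}$. Consequently, for $n\leq N$ the $n$-th diagonal entry of $\boldsymbol{\Upsilon}_1-\boldsymbol{\Upsilon}_2$ equals $(|h_{r1,n}|^{2}-|h_{r2,n}|^{2})\,\|\mathbf{g}_n\|^{2}$.

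The key step is the channel-model observation tied to the hypothesis that the IRS is at user 1. Under the standard large-scale path-loss model, the IRS-to-user-1 link gain grows without bound as the IRS-user-1 separation vanishes, so $|h_{r1,n}|^{2}\geq|h_{r2,n}|^{2}$ holds for every reflecting element $n$ (user 1 sits on the IRS, whereas user 2 stays at a fixed, strictly positive distance). Choosing any $n$ with $\mathbf{g}_n\neq\boldsymbol{0}$ then gives a nonnegative diagonal entry, whence $\lambda_{\max}(\boldsymbol{\Upsilon}_1-\boldsymbol{\Upsilon}_2)\geq 0$, and Proposition \ref{P:the feasibility-los} guarantees that \eqref{quasi-degradation} is feasible. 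The assumption $\boldsymbol{h}_1^{H}\boldsymbol{h}_2\neq 0$ enters only to keep the problem nondegenerate: it ensures $\cos^{2}\alpha>0$ so that the left-hand side of \eqref{quasi-degradation} stays finite, the orthogonal case $\cos^{2}\alpha=0$ rendering it $+\infty$ and unsatisfiable.

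I expect the main obstacle to be the clean formalization of ``the IRS is located at the position of user 1.'' The eigenvalue/diagonal manipulation is routine; the substance is translating geometric co-location into the element-wise gain dominance $|h_{r1,n}|^{2}\geq|h_{r2,n}|^{2}$, which requires an explicit path-loss assumption (the gain diverging, or at least dominating, as the distance tends to zero). I would state this modeling assumption explicitly, and if a purely algebraic limiting statement is preferred I would instead argue through the trace: co-location makes $\textrm{Tr}(\boldsymbol{\Upsilon}_1-\boldsymbol{\Upsilon}_2)>0$, and $\lambda_{\max}(\boldsymbol{\Upsilon}_1-\boldsymbol{\Upsilon}_2)\geq \textrm{Tr}(\boldsymbol{\Upsilon}_1-\boldsymbol{\Upsilon}_2)/(N+1)>0$ then closes the argument.
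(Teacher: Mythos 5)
Your route is genuinely different from the paper's, and it falls short of what the corollary actually asserts. You reduce everything to Proposition \ref{P:the feasibility-los} by checking $\lambda_{\max}(\boldsymbol{\Upsilon}_1-\boldsymbol{\Upsilon}_2)\geq 0$ through a diagonal entry (or the trace). That step is algebraically sound: for a Hermitian matrix the largest eigenvalue dominates every diagonal entry, $[\boldsymbol{\Phi}_k\boldsymbol{\Phi}_k^{H}]_{nn}=|h_{rk,n}|^{2}\|\mathbf{g}_n\|^{2}$ is correct, and co-location does make the user-1 reflected gains dominate. But Proposition \ref{P:the feasibility-los} only delivers \emph{feasibility} of \eqref{quasi-degradation}, i.e.\ the existence of some phase-shift matrix $\boldsymbol{\Theta}$ satisfying it. The corollary, and the paper's proof of it, claim more: with the IRS at user 1's position the inequality \eqref{quasi-degradation} holds outright. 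The paper argues directly on the inequality itself: co-location gives $\left\Vert\mathbf{h}_{r1}\right\Vert^{2}\rightarrow+\infty$, hence $\left\Vert\boldsymbol{h}_1\right\Vert^{2}/\left\Vert\boldsymbol{h}_2\right\Vert^{2}\rightarrow+\infty$, while the left-hand side $L(\cos^{2}\alpha)$ is monotonically decreasing in $\cos^{2}\alpha$ and the hypothesis $\boldsymbol{h}_1^{H}\boldsymbol{h}_2\neq 0$ gives $\cos^{2}\alpha\geq\kappa>0$, so $L(\cos^{2}\alpha)\leq L(\kappa)<+\infty$ stays bounded. A divergent right-hand side against a bounded left-hand side closes the argument without invoking Proposition \ref{P:the feasibility-los} at all.

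The telltale sign of the gap is your own treatment of the hypothesis $\boldsymbol{h}_1^{H}\boldsymbol{h}_2\neq 0$: in your chain it does no work (the diagonal/trace computation never touches $\cos^{2}\alpha$), so your argument would ``prove'' the corollary even without it --- yet the conclusion is false without it, since orthogonal effective channels drive the left-hand side of \eqref{quasi-degradation} to $+\infty$ (cf.\ Proposition \ref{P_ZF_condition}). In the paper's proof that hypothesis is precisely the load-bearing ingredient that bounds $L(\cos^{2}\alpha)$. To repair your version you would either have to upgrade the conclusion of Proposition \ref{P:the feasibility-los} (which you cannot, as stated), or switch to the direct divergence-versus-boundedness comparison, which is what the paper does.
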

\begin{proof}
See Appendix B.
\end{proof}
In practice, the location of the IRS can be easily controlled \cite{di2019smart}.
Hence, to realize the capacity region using NOMA, the IRS would better
be located close to user 1. 
\begin{rem}
In the following, we further show that, adopting IRS, the quasi-degradation
condition would be satisfied with a greater possibility. 

To better illustrate the advantage of the employment of IRS in MISO
NOMA systems, we provide the following example. In Fig. 2 and Fig.
3, the BS is located at point $\left(0,0\right)$, the IRS central
element is located at point $\left(5,5\right)$ and the single-antenna
users are randomly placed within a rectangular around the BS. The
channel between the IRS and user $k$ is given by $\mathbf{h}_{rk}=d_{rk}^{-\alpha}\mathbf{g}_{rk}$,
where $d_{rk}$ is the distance between user $k$ and the BS, $\alpha=3$
is the path loss exponent, $\mathbf{g}_{rk}$ follows a Rayleigh distribution.
Similarly, the channel between the BS and the IRS is $\mathbf{G}=d_{r}^{-\alpha}\mathbf{g}_{r}$,
and the direct channel between the BS and user $k$ is $\mathbf{h}_{dk}=d_{k}^{-\alpha}\mathbf{g}_{k}$.
Assume that the location of user 1 is fixed, we focus on describing
different locations of user 2 to satisfy the quasi-degradation condition
without IRS, i.e., 
\begin{equation}
\frac{1+r_{1}^{\min}}{\cos^{2}\alpha}-\frac{r_{1}^{\min}\cos^{2}\alpha}{\left(1+r_{2}^{\min}\left(1-\cos^{2}\alpha\right)\right)^{2}}\leq\frac{\left\Vert \mathbf{h}_{d1}\right\Vert ^{2}}{\left\Vert \mathbf{h}_{d2}\right\Vert ^{2}},
\end{equation}
where 
\begin{equation}
\cos^{2}\alpha=\frac{\mathbf{h}_{d1}^{H}\mathbf{h}_{d2}\mathbf{h}_{d2}^{H}\mathbf{h}_{d1}}{\left\Vert \mathbf{h}_{d1}\right\Vert ^{2}\left\Vert \mathbf{h}_{d2}\right\Vert ^{2}},
\end{equation}
and the improved quasi-degradation condition provided in
Proposition \ref{P:the feasibility-los} as \eqref{sufficiency_quasi-degradation-los}.
In addition, in Fig. 2 and Fig. 3, the black area denotes the locations
of user 2 satisfying the quasi-degradation condition without IRS or
the improved quasi-degradation condition.

Fig. 2 and Fig. 3 (a)  respectively show the region of user 2 to satisfy
quasi-degradation condition without IRS, and the improved quasi-degradation
condition. In these two figures, the location of user 1 is fixed
at $\left(5,5.5\right)$. From Fig. 2 and Fig.3 (a), one can easily find
that, with the help of IRS, the black area becomes larger. Therefore,
compared with the conventional MISO NOMA systems without IRS, the
quasi-degradation condition is more likely to be satisfied in IRS-assisted
MISO NOMA systems. Thus, the use of IRS in MISO systems facilitates to achieve the best performance by using
NOMA transmission scheme. 

Moreover, Fig. 3 (b)  depicts the region of user 2 to satisfy the
improved quasi-degradation condition. Different from Fig. 3 (a), the location
of user 1 is fixed at $\left(5,5.2\right)$. Hence, compared with
Fig. 3 (a), the location of user 1 in Fig. 3 (b) is closer to the IRS. From
Fig. 3 (a)  and Fig. 3 (b), it is easy to see that if user 1 is near the IRS,
the quasi-degradation condition is more likely to be satisfied. Actually,
according to Corollary \ref{C-q}, if user 1 and the IRS are at the same
location, the quasi-degradation condition can always be satisfied.
\begin{figure}
\centering
\includegraphics[scale=0.5]{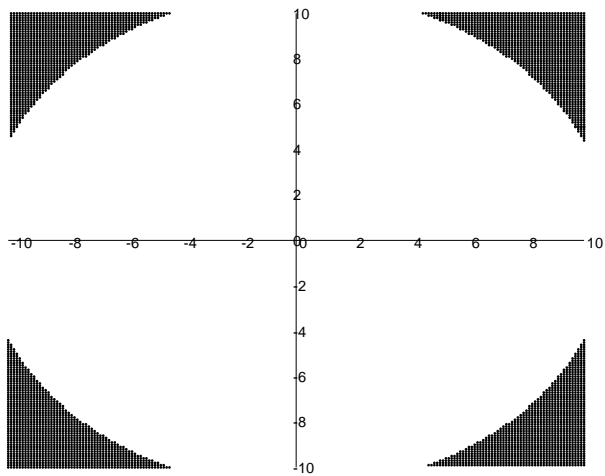}
\caption{Quasi-degradation region of user 2 without using IRS with fixed location of
user 1 at $(5,5.5)$.}
\end{figure}
\begin{figure}
\centering 
\subfigure[Fixed locations of user 1 at (5, 5.5) and IRS at (5, 5).]{
\includegraphics[scale=0.4]{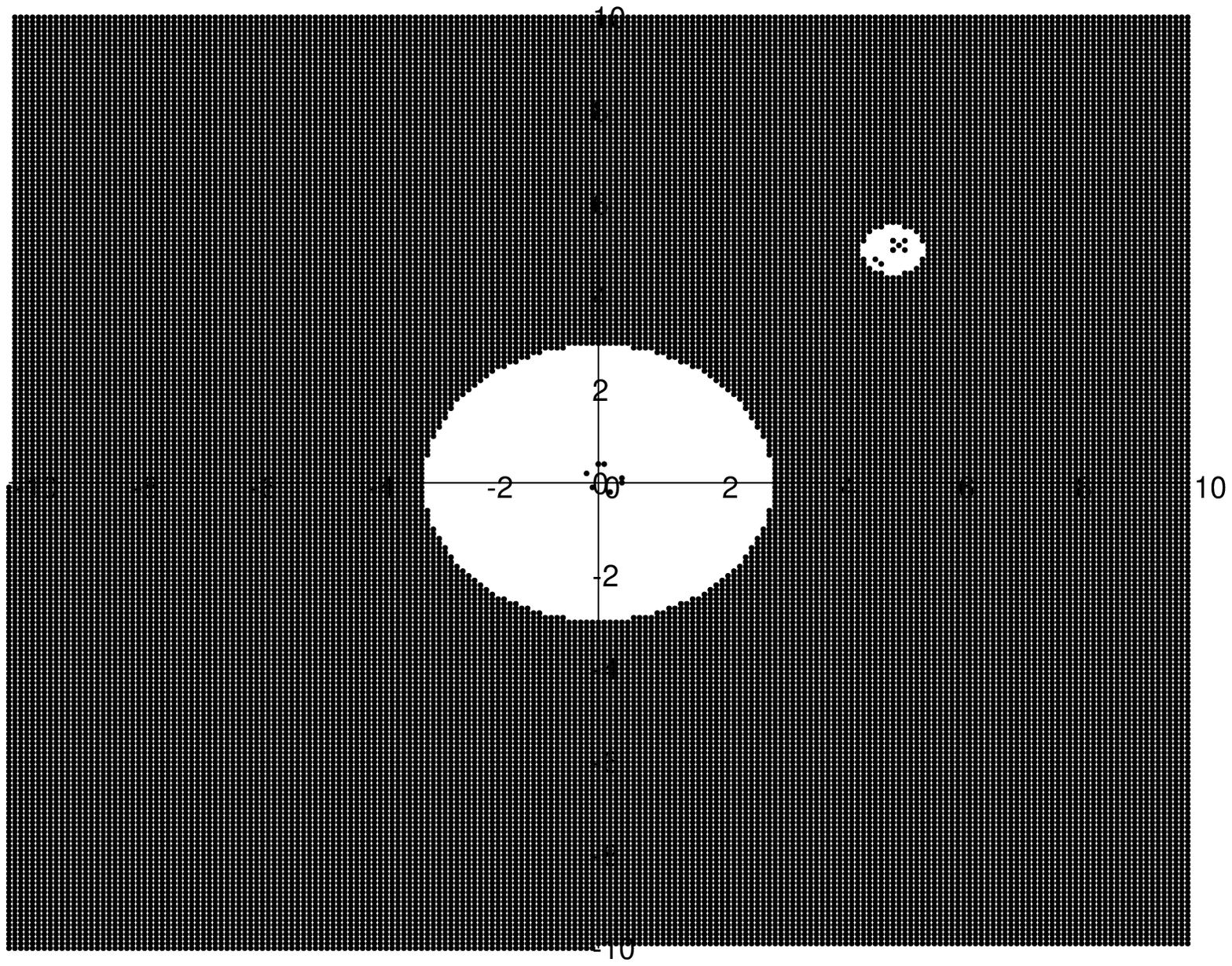}}
\subfigure[Fixed locations of user 1 at (5, 5.2) and IRS at (5, 5).]{
\includegraphics[scale=0.4]{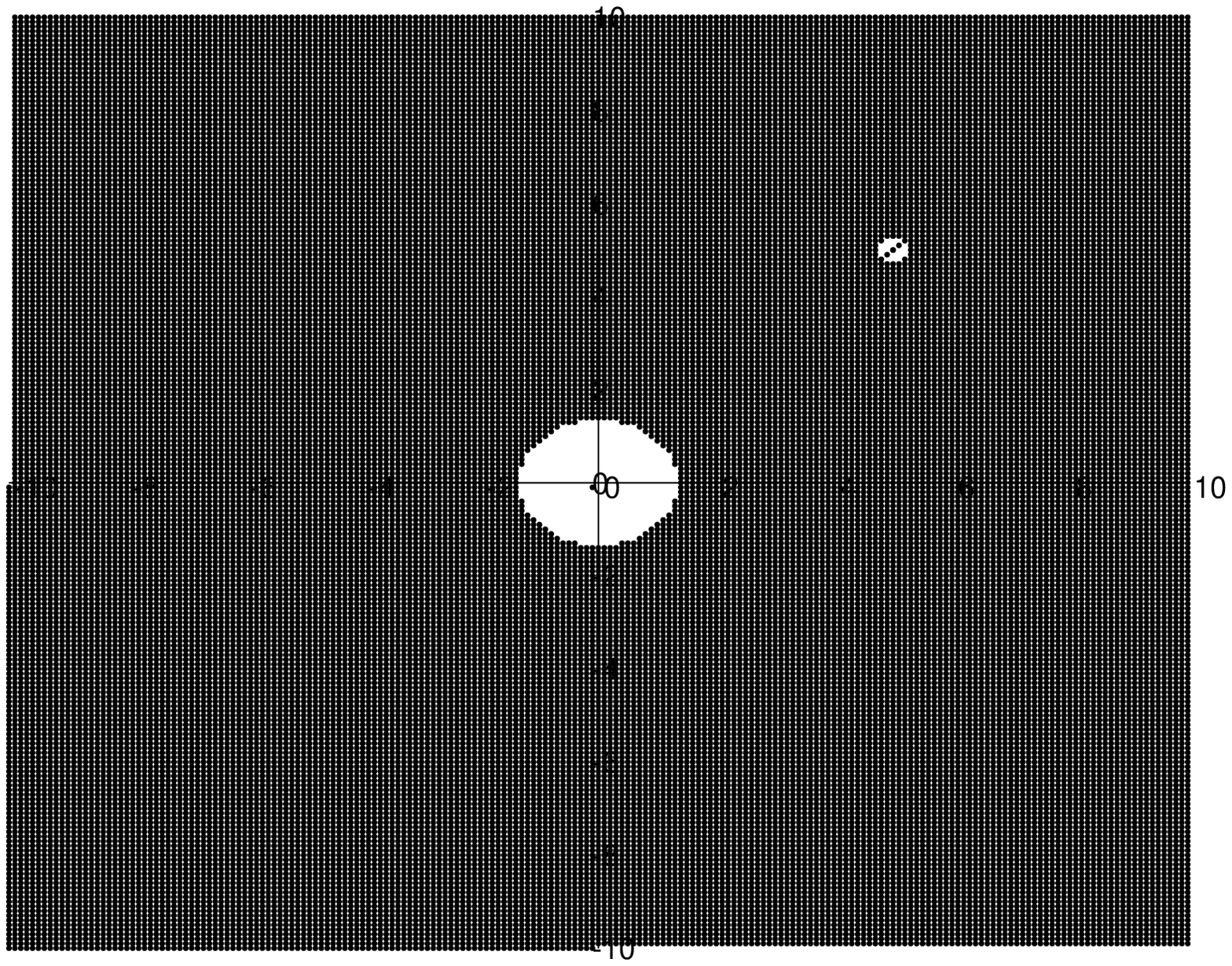}}
\caption{Improved quasi-degradation region of user 2.}
\end{figure}
\end{rem}

\subsection{ Beamforming Vectors and the IRS Phase Shift Matrix Design for MTP}

In this subsection, we focus on optimizing the beamforming vectors
and the IRS phase shift matrix for MTP. Given \eqref{sufficiency_quasi-degradation-los},
the proposed optimization problem \eqref{P1-NOMA} is always feasible.
However, problem \eqref{P1-NOMA} is a non-convex problem due to the
non-convex constraints with respect to $\boldsymbol{w}_{1}$, $\boldsymbol{w}_{2}$,
and $\boldsymbol{\Theta}$. In the following, we first optimize the
beamforming vectors, i.e., $\boldsymbol{w}_{1}$, $\boldsymbol{w}_{2}$,
for any given $\boldsymbol{\Theta}$ and the results are given in
the following Lemma.
\begin{lem}
\label{L-optimal beamforming}Given the improved quasi-degradation condition \eqref{sufficiency_quasi-degradation-los},
for any given $\boldsymbol{\Theta}$, the optimal beamforming solutions
to problem \eqref{P1-NOMA} is 
\begin{equation}
\begin{cases}
\boldsymbol{w}_{1}^{*}=\phi_{1}\left(1+A_{2}\right)\boldsymbol{e}_{1}-r_{2}^{\min}\boldsymbol{e}_{2}^{H}\boldsymbol{e}_{1}\boldsymbol{e}_{2},\\
\boldsymbol{w}_{2}^{*}=\phi_{2}\boldsymbol{e}_{2},
\end{cases}\label{optimal beamforming}
\end{equation}
where
\begin{equation}
\begin{cases}
\boldsymbol{e}_{1}=\frac{\boldsymbol{h}_{1}}{\left\Vert \boldsymbol{h}_{1}\right\Vert },\boldsymbol{e}_{2}=\frac{\boldsymbol{h}_{2}}{\left\Vert \boldsymbol{h}_{2}\right\Vert },\\
\phi_{1}^{2}=\frac{r_{1}^{\min}\sigma^{2}}{\left\Vert \boldsymbol{h}_{1}\right\Vert ^{2}}\frac{1}{\left(1+r_{2}^{\min}\sin^{2}\alpha\right)^{2}},\\
\phi_{2}^{2}=\frac{r_{2}^{\min}\sigma^{2}}{\left\Vert \boldsymbol{h}_{2}\right\Vert ^{2}}+\frac{r_{1}^{\min}\sigma^{2}}{\left\Vert \boldsymbol{h}_{1}\right\Vert ^{2}}\frac{r_{2}^{\min}\cos^{2}\alpha}{\left(1+r_{2}^{\min}\sin^{2}\alpha\right)^{2}}.
\end{cases}.\label{fi_1,fi_2}
\end{equation}
\end{lem}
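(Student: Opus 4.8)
The plan is to exploit that, for any fixed $\boldsymbol{\Theta}$, the composite channels $\boldsymbol{h}_1$ and $\boldsymbol{h}_2$ in \eqref{channel of user k} are completely determined, so the inner minimization over $(\boldsymbol{w}_1,\boldsymbol{w}_2)$ in \eqref{P1-NOMA} collapses to the two-user MISO NOMA power-minimization problem with \emph{fixed} channels solved in \cite{z._chen_application_2016,jianyue}. Since the improved quasi-degradation condition \eqref{sufficiency_quasi-degradation-los} guarantees, through Proposition~\ref{P:the feasibility-los}, that $\boldsymbol{\Theta}$ can be chosen so that \eqref{quasi-degradation} holds, the quasi-degradation constraint is consistent and NOMA attains the DPC optimum; the closed form \eqref{optimal beamforming}--\eqref{fi_1,fi_2} is then exactly the beamforming solution of those references specialized to $\boldsymbol{h}_1,\boldsymbol{h}_2$. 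My task is therefore to justify this reduction and re-derive the closed form, not to introduce new machinery.

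First I would reduce the dimensionality. Writing $\boldsymbol{w}_i=\boldsymbol{w}_i^{\parallel}+\boldsymbol{w}_i^{\perp}$, where $\boldsymbol{w}_i^{\perp}$ is orthogonal to both $\boldsymbol{h}_1$ and $\boldsymbol{h}_2$, every term in \eqref{user_1_qos}--\eqref{user 2_qos} depends on $\boldsymbol{w}_i$ only through $\boldsymbol{h}_k^H\boldsymbol{w}_i$ and is thus blind to $\boldsymbol{w}_i^{\perp}$, while $\boldsymbol{w}_i^{\perp}$ strictly inflates $\|\boldsymbol{w}_i\|^2$. Hence any optimizer satisfies $\boldsymbol{w}_i^{\perp}=\boldsymbol{0}$ and lies in $\textrm{span}\{\boldsymbol{e}_1,\boldsymbol{e}_2\}$, leaving a two-complex-dimensional program.

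Next I would pin down the active constraints. Scaling $\boldsymbol{w}_1$ down decreases both $|\boldsymbol{h}_1^H\boldsymbol{w}_1|^2$ and the interference terms $|\boldsymbol{h}_k^H\boldsymbol{w}_1|^2$ entering the user-2 SINRs, so it only tightens \eqref{user_1_qos}; therefore \eqref{user_1_qos} is active, fixing $|\boldsymbol{h}_1^H\boldsymbol{w}_1|^2=r_1^{\min}\sigma^2$. Likewise, scaling $\boldsymbol{w}_2$ down lowers both $\textrm{SINR}_{2,1}$ and $\textrm{SINR}_{2,2}$, so the user-2 requirement \eqref{user 2_qos} is active at $r_2^{\min}$. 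Under \eqref{quasi-degradation} I would argue that the user-1 decoding term $\textrm{SINR}_{2,1}$ is the non-binding one, so the governing equality is $\textrm{SINR}_{2,2}=r_2^{\min}$, i.e., $|\boldsymbol{h}_2^H\boldsymbol{w}_2|^2=r_2^{\min}(|\boldsymbol{h}_2^H\boldsymbol{w}_1|^2+\sigma^2)$; this is precisely where condition \eqref{quasi-degradation} (ensured by \eqref{sufficiency_quasi-degradation-los}) is used. Because maximizing $|\boldsymbol{h}_2^H\boldsymbol{w}_2|^2$ per unit power forces $\boldsymbol{w}_2=\phi_2\boldsymbol{e}_2$, I would then substitute $\boldsymbol{w}_2=\phi_2\boldsymbol{e}_2$, write $\boldsymbol{w}_1=c_1\boldsymbol{e}_1+c_2\boldsymbol{e}_2$, express $|\boldsymbol{h}_1^H\boldsymbol{w}_1|^2$, $|\boldsymbol{h}_2^H\boldsymbol{w}_1|^2$ and $\phi_2^2$ in terms of $\|\boldsymbol{h}_k\|$ and $\cos^2\alpha$, and solve the resulting equality system, which yields \eqref{optimal beamforming}--\eqref{fi_1,fi_2}.

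I expect the main obstacle to be that $\boldsymbol{w}_1$ and $\boldsymbol{w}_2$ are coupled through the binding constraint $|\boldsymbol{h}_2^H\boldsymbol{w}_2|^2=r_2^{\min}(|\boldsymbol{h}_2^H\boldsymbol{w}_1|^2+\sigma^2)$: increasing the $\boldsymbol{e}_2$-component of $\boldsymbol{w}_1$ raises $\|\boldsymbol{w}_1\|^2$ but suppresses the interference $|\boldsymbol{h}_2^H\boldsymbol{w}_1|^2$ at user~2 and hence lowers the required $\phi_2^2=\|\boldsymbol{w}_2\|^2$. Resolving this trade-off is the heart of the derivation and is what fixes the cross term $-r_2^{\min}\boldsymbol{e}_2^H\boldsymbol{e}_1\boldsymbol{e}_2$ in $\boldsymbol{w}_1^{*}$, with particular care needed for the phase of $\boldsymbol{e}_2^H\boldsymbol{e}_1$ so that the interference at user~2 is shaped with the correct sign. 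Since the $\textrm{SINR}$ constraints render the reduced program nonconvex, to upgrade the stationary point to a global optimum I would either pass to the lifted variables $\boldsymbol{W}_i=\boldsymbol{w}_i\boldsymbol{w}_i^{H}$ and exhibit a rank-one optimizer of the resulting SDP, or invoke directly the optimality proof of \cite{z._chen_application_2016,jianyue}, where the quasi-degradation condition is exactly what certifies that this active-constraint point is the global minimizer and meets the DPC power.
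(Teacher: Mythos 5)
Your plan is correct and matches the paper's route: the paper proves this lemma simply by citing Proposition~1 of \cite{z._chen_application_2016}, since for fixed $\boldsymbol{\Theta}$ the channels $\boldsymbol{h}_1,\boldsymbol{h}_2$ are fixed and the problem is exactly the two-user MISO NOMA power minimization solved there. Your reconstruction (restriction to $\mathrm{span}\{\boldsymbol{e}_1,\boldsymbol{e}_2\}$, activity of the QoS constraints, quasi-degradation certifying that the $\mathrm{SINR}_{2,1}$ constraint is non-binding and that the resulting stationary point meets the DPC lower bound) is a faithful account of that cited argument, so no gap.
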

\begin{proof}
See the proof of Proposition 1 in \cite{z._chen_application_2016}.
\end{proof}
From Lemma \ref{L-optimal beamforming}, for any given $\boldsymbol{\Theta}$,
the optimal beamforming is characterized in a closed form expression. Then, using
the closed-form beamforming solutions, we optimize the IRS
phase shift matrix, i.e., $\boldsymbol{\Theta}=\textrm{diag}\left(e^{j\theta_{1}},\cdots,e^{j\theta_{N}}\right)$.
The corresponding optimization problem is
\begin {subequations}
\begin{align}
\underset{\left\{ \boldsymbol{\Theta}\right\} }{\min}~ & F,\label{P7}\\
\textrm{s.t.}~ & \frac{1+r_{1}^{\min}}{\cos^{2}\alpha}-\frac{r_{1}^{\min}\cos^{2}\alpha}{\left(1+r_{2}^{\min}\left(1-\cos^{2}\alpha\right)\right)^{2}}\leq\frac{\left\Vert \boldsymbol{h}_{1}\right\Vert ^{2}}{\left\Vert \boldsymbol{h}_{2}\right\Vert ^{2}},\label{quasi_degradation_P7-1}\\
 & 0\leq\theta_{i}\leq2\pi,i=1,\cdots,N,
\end{align}
\end {subequations}
where
\begin{align}
F & =\left\Vert \boldsymbol{w}_{1}^{*}\right\Vert ^{2}+\left\Vert \boldsymbol{w}_{2}^{*}\right\Vert ^{2},\nonumber \\
 & =\phi_{1}^{2}\left(\left(1+r_{2}^{\min}\right)^{2}-\left(2+r_{2}^{\min}\right)r_{2}^{\min}\cos^{2}\alpha\right)+\phi_{2}^{2},\label{objective function-1}
\end{align}
and $\phi_{1}^{2},~\phi_{2}^{2}$ have been written in \eqref{fi_1,fi_2}. 

To solve problem \eqref{P7}, we introduce variable $\boldsymbol{Q}=\tilde{\boldsymbol{v}}\tilde{\boldsymbol{v}}^{H}$, 
with $\widetilde{\boldsymbol{v}}=\left[\boldsymbol{v};1\right]$, and
$\boldsymbol{v}=[e^{j\theta_{1}},\cdots,e^{j\theta_{N}}]^{H}$. Therefore, for $k=1, 2$,  $\boldsymbol{h}_{k}^{H}=\mathbf{h}_{rk}^{H}\boldsymbol{\Theta}\mathbf{G}+\mathbf{h}_{dk}^{H}=\boldsymbol{v}^{H}\boldsymbol{\Phi}_{k}+\mathbf{h}_{dk}^{H}$, with $\boldsymbol{\Phi}_{k}=\textrm{diag}\left(\mathbf{h}_{rk}\right)\mathbf{G}$
and 
\begin{equation}
\left\Vert \boldsymbol{h}_{k}\right\Vert ^{2}=\textrm{ Tr}\left(\boldsymbol{Q}\boldsymbol{\Upsilon}_{k}\right),\label{h_k'*h_k-1-1}
\end{equation}
\begin{equation}
\boldsymbol{h}_{1}^{H}\boldsymbol{h}_{2}=\textrm{Tr}\left(\boldsymbol{Q}\boldsymbol{R}\right),\label{h1'*h2-2}
\end{equation}
where $\boldsymbol{\Upsilon}_{k}$ has been given in \eqref{gama_k}
and 
\begin{equation}
\boldsymbol{R}=\left[\begin{array}{cc}
\boldsymbol{\Phi}_{1}\boldsymbol{\Phi}_{2}^{H} &\boldsymbol{\Phi}_{1}\mathbf{h}_{d2} \\
\mathbf{h}_{d1}^{H}\boldsymbol{\Phi}_{2}^{H}& \mathbf{h}_{d1}^{H}\mathbf{h}_{d2}
\end{array}\right].\label{R-1}
\end{equation}
Therefore, problem \eqref{P7} is reduced to the following
\begin {subequations}
\begin{align}
\underset{\left\{ \boldsymbol{Q}\succeq\boldsymbol{0}\right\} }{\min}~ & \phi_{1}^{2}\left(\left(1+r_{2}^{\min}\right)^{2}-\left(2+r_{2}^{\min}\right)r_{2}^{\min}\cos^{2}\alpha\right)+\phi_{2}^{2},\label{P7-1}\\
\textrm{s.t.}~ & \frac{1+r_{1}^{\min}}{\cos^{2}\alpha}-\frac{r_{1}^{\min}\cos^{2}\alpha}{\left(1+r_{2}^{\min}\left(1-\cos^{2}\alpha\right)\right)^{2}}\leq\frac{\textrm{Tr}\left(\boldsymbol{Q}\boldsymbol{\Upsilon}_{1}\right)}{\textrm{Tr}\left(\boldsymbol{Q}\boldsymbol{\Upsilon}_{2}\right)},\label{quasi_degradation_P7-1-1}\\
 & \boldsymbol{Q}_{i,i}=1,i=1,\cdots,N+1,\label{Q_i,i}\\
 & \textrm{Rank}\left(\boldsymbol{Q}\right)=1,\label{rank(Q)}
\end{align}
\end{subequations}
where
\begin{equation}
\cos^{2}\alpha=\frac{\textrm{Tr}\left(\boldsymbol{Q}\boldsymbol{R}\right)\textrm{Tr}\left(\boldsymbol{R}^{H}\boldsymbol{Q}^{H}\right)}{\textrm{ Tr}\left(\boldsymbol{Q}\boldsymbol{\Upsilon}_{1}\right)\textrm{ Tr}\left(\boldsymbol{Q}\boldsymbol{\Upsilon}_{2}\right)},
\end{equation}
\begin{equation}
\phi_{1}^{2}=\frac{r_{1}^{\min}\sigma^{2}}{\textrm{ Tr}\left(\boldsymbol{Q}\boldsymbol{\Upsilon}_{1}\right)}\frac{1}{\left(1+r_{2}^{\min}\sin^{2}\alpha\right)^{2}},
\end{equation}
and 

\begin{equation}
\phi_{2}^{2}=\frac{r_{2}^{\min}\sigma^{2}}{\textrm{ Tr}\left(\boldsymbol{Q}\boldsymbol{\Upsilon}_{2}\right)}+\frac{r_{1}^{\min}\sigma^{2}}{\textrm{ Tr}\left(\boldsymbol{Q}\boldsymbol{\Upsilon}_{1}\right)}\frac{r_{2}^{\min}\cos^{2}\alpha}{\left(1+r_{2}^{\min}\sin^{2}\alpha\right)^{2}}.
\end{equation}
Note that the constraints in \eqref{Q_i,i} and \eqref{rank(Q)} come
from the introduced variable substitution: $\boldsymbol{Q}=\tilde{\boldsymbol{v}}\tilde{\boldsymbol{v}}^{H}$.
Problem \eqref{P7-1} is very difficult to solve due to the tricky
objective function and the quasi-degradation constraint. In practice, the objective function and the quasi-degradation constraint are both nonconvex due to the complex expression of $\cos^{2}\alpha$.  In the following,
we first deal with the quasi-degradation constraint, which will be
transformed into a convex constraint.
\begin{prop}
\label{P_quasi-degradation}The quasi-degradation constraint can be
satisfied if 
\begin{equation}
\boldsymbol{\chi}_{1}\boldsymbol{Q}\boldsymbol{\chi}_{1}^{H}-\left(1+r_{1}^{\min}\right)\boldsymbol{\chi}_{2}\boldsymbol{Q}\boldsymbol{\chi}_{2}^{H}\succeq\boldsymbol{0},\label{quasi-transfer}
\end{equation}
where $\boldsymbol{\chi}_{k}=\left[\mathbf{\Phi}_{k}^{H},\mathbf{h}_{dk}\right]$,
$k=1,2$.
\end{prop}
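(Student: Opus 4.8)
The plan is to reduce the matrix inequality \eqref{quasi-transfer} to a rank-one statement about the actual channels and then peel off the constraint \eqref{quasi_degradation_P7-1-1} one term at a time. The starting observation is that the augmented matrix $\boldsymbol{\chi}_k=\left[\boldsymbol{\Phi}_k^H,\mathbf{h}_{dk}\right]$ acts on $\widetilde{\boldsymbol{v}}=\left[\boldsymbol{v};1\right]$ as $\boldsymbol{\chi}_k\widetilde{\boldsymbol{v}}=\boldsymbol{\Phi}_k^H\boldsymbol{v}+\mathbf{h}_{dk}$, which is exactly the Hermitian transpose of $\boldsymbol{h}_k^H=\boldsymbol{v}^H\boldsymbol{\Phi}_k+\mathbf{h}_{dk}^H$; hence $\boldsymbol{\chi}_k\widetilde{\boldsymbol{v}}=\boldsymbol{h}_k$. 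Since $\boldsymbol{Q}=\widetilde{\boldsymbol{v}}\widetilde{\boldsymbol{v}}^H$, this gives $\boldsymbol{\chi}_k\boldsymbol{Q}\boldsymbol{\chi}_k^H=\boldsymbol{\chi}_k\widetilde{\boldsymbol{v}}\widetilde{\boldsymbol{v}}^H\boldsymbol{\chi}_k^H=\boldsymbol{h}_k\boldsymbol{h}_k^H$ for $k=1,2$, so \eqref{quasi-transfer} is equivalent to the compact positive semidefinite condition $\boldsymbol{h}_1\boldsymbol{h}_1^H\succeq\left(1+r_1^{\min}\right)\boldsymbol{h}_2\boldsymbol{h}_2^H$.

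Next I would exploit this inequality by testing it against a carefully chosen vector. Forming the quadratic form $\boldsymbol{h}_2^H(\cdot)\boldsymbol{h}_2$ on both sides yields $\left|\boldsymbol{h}_1^H\boldsymbol{h}_2\right|^2\geq\left(1+r_1^{\min}\right)\|\boldsymbol{h}_2\|^4$. Dividing through by $\|\boldsymbol{h}_1\|^2\|\boldsymbol{h}_2\|^2$ and invoking the definition $\cos^2\alpha=\left|\boldsymbol{h}_1^H\boldsymbol{h}_2\right|^2/\left(\|\boldsymbol{h}_1\|^2\|\boldsymbol{h}_2\|^2\right)$ converts this into $\cos^2\alpha\,\|\boldsymbol{h}_1\|^2/\|\boldsymbol{h}_2\|^2\geq 1+r_1^{\min}$, that is, $\|\boldsymbol{h}_1\|^2/\|\boldsymbol{h}_2\|^2\geq\left(1+r_1^{\min}\right)/\cos^2\alpha$.

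Finally I would close the remaining gap to the quasi-degradation constraint by a nonnegativity argument. The left-hand side of \eqref{quasi_degradation_P7-1-1} is $\left(1+r_1^{\min}\right)/\cos^2\alpha$ minus the term $r_1^{\min}\cos^2\alpha/\left(1+r_2^{\min}\left(1-\cos^2\alpha\right)\right)^2$, which is nonnegative since $r_1^{\min}$, $\cos^2\alpha$, and the squared denominator are all nonnegative. Hence the left-hand side is at most $\left(1+r_1^{\min}\right)/\cos^2\alpha$, which by the previous step is at most $\|\boldsymbol{h}_1\|^2/\|\boldsymbol{h}_2\|^2$, so the quasi-degradation constraint holds.

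The argument is a chain of inequalities, so there is no serious analytic obstacle; the one point deserving care is the reduction in the first paragraph, namely verifying that the column partition of $\boldsymbol{\chi}_k$ matches the block structure of $\widetilde{\boldsymbol{v}}$ so that $\boldsymbol{\chi}_k\widetilde{\boldsymbol{v}}$ reproduces $\boldsymbol{h}_k$ exactly, with the trailing entry $1$ absorbing the direct channel $\mathbf{h}_{dk}$. Everything after that reduction follows directly from positive semidefiniteness applied to the test vector $\boldsymbol{h}_2$ together with the sign of the dropped term.
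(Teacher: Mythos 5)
Your proposal is correct and follows essentially the same route as the paper's Appendix C: identify $\boldsymbol{\chi}_k\boldsymbol{Q}\boldsymbol{\chi}_k^H$ with $\boldsymbol{h}_k\boldsymbol{h}_k^H$, test the resulting semidefinite inequality against $\boldsymbol{h}_2$ to obtain $\left(1+r_1^{\min}\right)/\cos^2\alpha\leq\left\Vert\boldsymbol{h}_1\right\Vert^2/\left\Vert\boldsymbol{h}_2\right\Vert^2$, and then drop the nonnegative subtracted term. The only difference is that you verify the rank-one identity $\boldsymbol{\chi}_k\widetilde{\boldsymbol{v}}=\boldsymbol{h}_k$ explicitly, which the paper simply asserts.
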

\begin{proof}
See Appendix C.
\end{proof}
The quasi-degradation constraint \eqref{quasi-transfer} in Proposition \ref{P_quasi-degradation} is convex. We then focus on the complex nonconvex objective function. An
upper bound of the objective function is given in Proposition \ref{P_upper bound}. 
\begin{prop}
\label{P_upper bound}The upper bound of the objective function in
problem \eqref{P7} is given by 
\begin{equation}
F\leq\frac{\sigma^{2}\left(r_{1}^{\min}\left(1+r_{2}^{\min}\right)\right)}{\textrm{ Tr}\left(\boldsymbol{Q}\boldsymbol{\Upsilon}_{1}\right)}+\frac{\sigma^{2}r_{2}^{\min}}{\textrm{ Tr}\left(\boldsymbol{Q}\boldsymbol{\Upsilon}_{2}\right)}.
\end{equation}
\end{prop}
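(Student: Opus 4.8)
The plan is to substitute the closed-form expressions for $\phi_{1}^{2}$ and $\phi_{2}^{2}$ directly into the definition of $F$ in \eqref{objective function-1}, simplify the resulting rational function of $\cos^{2}\alpha$, and then bound it using only the elementary fact that $\sin^{2}\alpha\geq0$. Writing $T_{k}=\textrm{Tr}\left(\boldsymbol{Q}\boldsymbol{\Upsilon}_{k}\right)$ for brevity, I first observe that the term $\frac{r_{2}^{\min}\sigma^{2}}{T_{2}}$ coming from $\phi_{2}^{2}$ passes through $F$ untouched and is exactly the second term of the claimed bound; hence the entire content of the proposition concerns the coefficient of $\frac{r_{1}^{\min}\sigma^{2}}{T_{1}}$.

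Collecting every contribution proportional to $\frac{r_{1}^{\min}\sigma^{2}}{T_{1}}$ --- one piece from $\phi_{1}^{2}$ multiplied by the bracket $\left(\left(1+r_{2}^{\min}\right)^{2}-\left(2+r_{2}^{\min}\right)r_{2}^{\min}\cos^{2}\alpha\right)$, and one piece from the second summand of $\phi_{2}^{2}$ --- I obtain a single rational expression with denominator $\left(1+r_{2}^{\min}\sin^{2}\alpha\right)^{2}$ and numerator
\begin{equation}
\left(1+r_{2}^{\min}\right)^{2}-\left(2+r_{2}^{\min}\right)r_{2}^{\min}\cos^{2}\alpha+r_{2}^{\min}\cos^{2}\alpha.
\end{equation}
The key algebraic step is to recognize that this numerator factors. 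The two $\cos^{2}\alpha$ terms combine into $-r_{2}^{\min}\left(1+r_{2}^{\min}\right)\cos^{2}\alpha$, and using $\sin^{2}\alpha=1-\cos^{2}\alpha$ the numerator becomes $\left(1+r_{2}^{\min}\right)\left(1+r_{2}^{\min}\sin^{2}\alpha\right)$. Thus one power of the denominator cancels, and $F$ collapses to the clean exact form
\begin{equation}
F=\frac{r_{1}^{\min}\sigma^{2}}{T_{1}}\cdot\frac{1+r_{2}^{\min}}{1+r_{2}^{\min}\sin^{2}\alpha}+\frac{r_{2}^{\min}\sigma^{2}}{T_{2}}.
\end{equation}

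With $F$ in this form the bound is immediate: since $r_{2}^{\min}\geq0$ and $\sin^{2}\alpha\geq0$, the denominator satisfies $1+r_{2}^{\min}\sin^{2}\alpha\geq1$, so $\frac{1+r_{2}^{\min}}{1+r_{2}^{\min}\sin^{2}\alpha}\leq1+r_{2}^{\min}$, which reinserts into the displayed identity to give exactly the stated inequality, with equality when $\sin^{2}\alpha=0$ (the co-linear channel case). The only nonroutine part of the argument is the factorization of the numerator; everything else is substitution and a one-line monotonicity estimate. I would therefore carry out the cancellation explicitly, since it is the step where an arithmetic slip is easiest, and I would emphasize that the intermediate exact identity for $F$ is itself valuable, as it pinpoints that the relaxation is tight precisely when the users' effective channels are aligned.
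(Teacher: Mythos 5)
Your proposal is correct and follows essentially the same route as the paper: the paper's Appendix D also rewrites $F$ exactly as $\frac{\sigma^{2}r_{1}^{\min}\left(1+r_{2}^{\min}\right)}{\textrm{Tr}\left(\boldsymbol{Q}\boldsymbol{\Upsilon}_{1}\right)}\cdot\frac{1}{1+r_{2}^{\min}\left(1-\cos^{2}\alpha\right)}+\frac{\sigma^{2}r_{2}^{\min}}{\textrm{Tr}\left(\boldsymbol{Q}\boldsymbol{\Upsilon}_{2}\right)}$ (your factorization of the numerator into $\left(1+r_{2}^{\min}\right)\left(1+r_{2}^{\min}\sin^{2}\alpha\right)$ is the step the paper leaves implicit) and then drops the denominator using $0\leq\cos^{2}\alpha\leq1$. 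Your explicit cancellation and the remark about tightness at $\sin^{2}\alpha=0$ are accurate additions, but the argument is the same.
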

\begin{proof}
See Appendix D.
\end{proof}
Using Proposition \ref{P_quasi-degradation} and Proposition \ref{P_upper bound},
an approximation solution to problem \eqref{P7-1} is obtained by solving
the following problem
\begin {subequations}
\begin{align}
\underset{\left\{ \boldsymbol{Q}\succeq\boldsymbol{0}\right\} }{\min}~ & \frac{\sigma^{2}\left(r_{1}^{\min}\left(1+r_{2}^{\min}\right)\right)}{\textrm{ Tr}\left(\boldsymbol{Q}\boldsymbol{\Upsilon}_{1}\right)}+\frac{\sigma^{2}r_{2}^{\min}}{\textrm{ Tr}\left(\boldsymbol{Q}\boldsymbol{\Upsilon}_{2}\right)},\label{P4}\\
\textrm{s.t.}~ & \boldsymbol{\chi}_{1}\boldsymbol{Q}\boldsymbol{\chi}_{1}^{H}-\left(1+r_{1}^{\min}\right)\boldsymbol{\chi}_{2}\boldsymbol{Q}\boldsymbol{\chi}_{2}^{H}\succeq\boldsymbol{0}, \\
 & \boldsymbol{Q}_{i,i}=1,i=1,\cdots,N+1, \\
 & \textrm{Rank}\left(\boldsymbol{Q}\right)=1.
\end{align}
\end{subequations}
Note that \eqref{P4} is actually a sum of ratios problem. In order
to solve this multiple-ratio fractional programming, we exploit the
quadratic transform, which is described in the following Lemma \cite{8314727}.
\begin{lem}
\label{quadratic transform}\cite{8314727}The sum-of-ratio
problem 
\begin{align}
\underset{\left\{ \boldsymbol{x}\right\} }{\max}~ & \sum_{m=1}^{M}\frac{A_{m}\left(\boldsymbol{x}\right)}{B_{m}\left(\boldsymbol{x}\right)},\\
\textrm{s.t.} & ~\boldsymbol{x}\in\mathcal{X},\nonumber 
\end{align}
is equivalent to 
\begin{align}
\underset{\left\{ \boldsymbol{x},\boldsymbol{y}\right\} }{\max}~ & \sum_{m=1}^{M}\left(2y_{m}\sqrt{A_{m}\left(\boldsymbol{x}\right)}-y_{m}^{2}B_{m}\left(\boldsymbol{x}\right)\right),\\
\textrm{s.t.}~ & \boldsymbol{x}\in\mathcal{X},y_{m}\in\mathbb{R},m=1,\cdots,M,\nonumber 
\end{align}
where $\boldsymbol{y}$ refers to a collection of variables $\left\{ y_{1},\cdots,y_{M}\right\} $. 
\end{lem}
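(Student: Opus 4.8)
The plan is to establish the claimed equivalence by showing that the two problems attain the same optimal value and share the same optimal $\boldsymbol{x}$, by optimizing the auxiliary variables $\boldsymbol{y}$ out in closed form. The central observation is that, for any fixed feasible $\boldsymbol{x}\in\mathcal{X}$, the reformulated objective decouples across the index $m$, and each term $2y_m\sqrt{A_m(\boldsymbol{x})}-y_m^2 B_m(\boldsymbol{x})$ is a concave quadratic in the scalar $y_m$. This concavity holds under the standing assumptions of the fractional-programming setting, namely $A_m(\boldsymbol{x})\geq 0$ (needed for the square root) and $B_m(\boldsymbol{x})>0$ (needed for the downward-opening parabola).

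First I would fix $\boldsymbol{x}\in\mathcal{X}$ and maximize the reformulated objective over $\boldsymbol{y}$ term by term. Setting the derivative of the $m$-th term to zero,
\begin{equation}
2\sqrt{A_m(\boldsymbol{x})}-2y_m B_m(\boldsymbol{x})=0,
\end{equation}
gives the unique maximizer $y_m^{*}=\sqrt{A_m(\boldsymbol{x})}/B_m(\boldsymbol{x})$. Substituting $y_m^{*}$ back into that term yields
\begin{equation}
2y_m^{*}\sqrt{A_m(\boldsymbol{x})}-\left(y_m^{*}\right)^2 B_m(\boldsymbol{x})=\frac{2A_m(\boldsymbol{x})}{B_m(\boldsymbol{x})}-\frac{A_m(\boldsymbol{x})}{B_m(\boldsymbol{x})}=\frac{A_m(\boldsymbol{x})}{B_m(\boldsymbol{x})},
\end{equation}
so the inner maximization over $\boldsymbol{y}$ exactly recovers the original summand. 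Summing over $m$ and then maximizing over $\boldsymbol{x}$ gives
\begin{equation}
\max_{\boldsymbol{x}\in\mathcal{X}}\,\max_{\boldsymbol{y}}\,\sum_{m=1}^{M}\left(2y_m\sqrt{A_m(\boldsymbol{x})}-y_m^{2} B_m(\boldsymbol{x})\right)=\max_{\boldsymbol{x}\in\mathcal{X}}\,\sum_{m=1}^{M}\frac{A_m(\boldsymbol{x})}{B_m(\boldsymbol{x})},
\end{equation}
which is precisely the asserted equivalence. Moreover, since $y_m^{*}$ is determined by $\boldsymbol{x}$ alone, any $\boldsymbol{x}$ that is optimal for one problem is optimal for the other.

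The step I expect to require the most care is justifying the interchange of the two maximizations together with the sign conditions that underpin it: the closed-form $y_m^{*}$ and the concavity argument rely on $B_m(\boldsymbol{x})>0$, while $\sqrt{A_m(\boldsymbol{x})}$ presupposes $A_m(\boldsymbol{x})\geq 0$ throughout $\mathcal{X}$. In our application these positivity conditions hold because the numerators and denominators are built from received signal powers and noise variances, but they must be recorded as standing hypotheses for the transform to be valid. Once they are in place the equivalence is immediate, and the real payoff is structural: for fixed $\boldsymbol{y}$ the reformulated objective is concave in $\boldsymbol{x}$ whenever each $\sqrt{A_m}$ is concave and each $B_m$ is affine (or convex) in $\boldsymbol{x}$, which is exactly what makes the transform amenable to the alternating optimization used in the sequel.
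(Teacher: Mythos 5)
Your proof is correct and is essentially the standard argument for the quadratic transform: the paper itself offers no proof of this lemma, merely citing \cite{8314727}, and your derivation (fixing $\boldsymbol{x}$, maximizing each concave quadratic term over $y_{m}$ at $y_{m}^{*}=\sqrt{A_{m}(\boldsymbol{x})}/B_{m}(\boldsymbol{x})$, and recovering $A_{m}(\boldsymbol{x})/B_{m}(\boldsymbol{x})$ upon substitution) is exactly the argument in that reference. Your remark that $A_{m}\geq0$ and $B_{m}>0$ must be standing hypotheses is a worthwhile addition, and both conditions do hold in the paper's application since the numerators are nonnegative constants and the denominators are traces $\textrm{Tr}(\boldsymbol{Q}\boldsymbol{\Upsilon}_{k})=\left\Vert \boldsymbol{h}_{k}\right\Vert ^{2}>0$.
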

Using Lemma \ref{quadratic transform}, problem \eqref{P4} is then reduced to
\begin {subequations}
\begin{align}
\underset{\left\{ \boldsymbol{Q}\succeq\boldsymbol{0},y_{1},y_{2}\right\} }{\max} & f\label{P5},\\
\textrm{s.t.} & ~\boldsymbol{\chi}_{1}\boldsymbol{Q}\boldsymbol{\chi}_{1}^{H}-\left(1+r_{1}^{\min}\right)\boldsymbol{\chi}_{2}\boldsymbol{Q}\boldsymbol{\chi}_{2}^{H}\succeq\boldsymbol{0},\label{quasi-P5}\\
 & \boldsymbol{Q}_{i,i}=1,i=1,\cdots,N+1,\\
 & \textrm{Rank}\left(\boldsymbol{Q}\right)=1,\label{rank-1}
\end{align}
\end {subequations}
where 
\[
f=-2y_{1}\sqrt{\sigma^{2}\left(r_{1}^{\min}\left(1+r_{2}^{\min}\right)\right)}+y_{1}^{2}\textrm{ Tr}\left(\boldsymbol{Q}\boldsymbol{\Upsilon}_{1}\right)-2y_{2}\sqrt{\sigma^{2}r_{2}^{\min}}+y_{2}^{2}\textrm{ Tr}\left(\boldsymbol{Q}\boldsymbol{\Upsilon}_{2}\right).
\]
Here, we propose to optimize the primal variable $\boldsymbol{Q}$
and the auxiliary variables $y_{1},$ $y_{2}$ iteratively. Specifically,
we first optimize $\boldsymbol{Q}$ by fixing the auxiliary variables
$y_{1},$ $y_{2}$ and then further optimize $y_{1},$ $y_{2}$. However,
by fixing $y_{1},$ $y_{2}$, \eqref{P5} is also nonconvex
due to the rank constraint \eqref{rank-1}. Here, we exploit the semidefinite relaxation (SDR) \cite{ma2010semidefinite}
method to solve the nonconvex problem \eqref{P5} and the relaxed
problem is therefore: 
\begin {subequations}
\begin{align}
\underset{\left\{ \boldsymbol{Q}\succeq\boldsymbol{0}\right\} }{\max}~ & y_{1}^{2}\textrm{Tr}\left(\boldsymbol{Q}\boldsymbol{\Upsilon}_{1}\right)+y_{2}^{2}\textrm{Tr}\left(\boldsymbol{Q}\boldsymbol{\Upsilon}_{2}\right),\label{P8}\\
\textrm{s.t.}~ & \boldsymbol{\chi}_{1}\boldsymbol{Q}\boldsymbol{\chi}_{1}^{H}-\left(1+r_{1}^{\min}\right)\boldsymbol{\chi}_{2}\boldsymbol{Q}\boldsymbol{\chi}_{2}^{H}\succeq\boldsymbol{0},\\
 & \boldsymbol{Q}_{i,i}=1,i=1,\cdots,N+1.
\end{align}
\end{subequations}

One can easily show that problem \eqref{P8} is a standard semidefinite
programming (SDP), which can be efficiently solved through convex
optimization tools, e.g., CVX. However, the solution to problem \eqref{P8}
cannot always satisfy the rank constraint, i.e., $\textrm{Rank}\left(\boldsymbol{Q}\right)=1$.
In general, if the solution to the relaxed problem in \eqref{P8}
is a set of rank-one matrices, then it will be also the optimal solution
to the problem \eqref{P5}. Otherwise, the randomization technique, see, e.g., 
\cite{huang_rank-constrained_2010}, can be used to generate a set
of rank-one solutions. 

In addition, with fixed $\boldsymbol{Q}$, the optimal $y_{1},$
$y_{2}$  are obtained in closed form as
\begin{equation}
y_{1}^{*}=\frac{\sqrt{\sigma^{2}\left(r_{1}^{\min}\left(1+r_{2}^{\min}\right)\right)}}{\textrm{Tr}\left(\boldsymbol{Q}\boldsymbol{\Upsilon}_{1}\right)},\label{y_1_*}
\end{equation}
\begin{equation}
y_{2}^{*}=\frac{\sqrt{\sigma^{2}r_{2}^{\min}}}{\textrm{Tr}\left(\boldsymbol{Q}\boldsymbol{\Upsilon}_{2}\right)}.\label{y_2_*}
\end{equation}

Therefore, the optimal $\boldsymbol{Q}$ can thus be efficiently obtained
through numerical convex optimization. The steps of the proposed approach are summarized
in Algorithm 1. Additional steps are needed to obtain the beamforming vectors and  IRS elements. Specifically, we first obtain the eigenvalue
decomposition of $\boldsymbol{Q}$ as $\boldsymbol{Q}=\boldsymbol{U}\boldsymbol{\Lambda}\boldsymbol{U}$,
where $\boldsymbol{U}=\left[e_{1},\cdots,e_{N+1}\right]$ is a unitary
matrix and $\boldsymbol{\Lambda}=\textrm{diag}\left(\lambda_{1},\cdots,\lambda_{N}\right)$
is a diagonal matrix. Therefore, the IRS elements are given as 
\begin{equation}
e^{j\theta_{k}}=e^{j\textrm{arg}\left(\frac{\tilde{\boldsymbol{v}}_{k}}{\tilde{\boldsymbol{v}}_{N+1}}\right)},k=1,\cdots,N. \label{IRS elements}
\end{equation}
In \eqref{IRS elements}, $\tilde{\boldsymbol{v}}=\boldsymbol{U}\boldsymbol{\Lambda}^{1/2}\boldsymbol{r}$
and $\boldsymbol{r}\in\mathbb{C}^{\left(N+1\right)\times1}$ is chosen from a lot of randomly generalized vectors satisfying $\boldsymbol{r}\in\mathcal{CN}\left(\boldsymbol{0},\boldsymbol{I}_{N+1}\right)$.
Then, the beamforming vectors can be easily obtained using Lemma \ref{L-optimal beamforming}.

\begin{algorithm}
\caption{Iterative Approach for solving \eqref{P5} }

1: Initialize $y_{1},$ $y_{2}$ to a feasible value.

2:\textbf{Repeat}

3: \hspace*{1cm}Update $\boldsymbol{Q}$ by solving the SDR problem in
\eqref{P8}. 

4: \hspace*{1cm}Update $y_{1}$, $y_{2}$ by using \eqref{y_1_*}
and \eqref{y_2_*}.

5:\textbf{Until} The value of objective function in \eqref{P8} converges or the maximum number of iterations is reached.
\end{algorithm}

\section{\label{sec:Beamforming-and-IRS-1}Beamforming and IRS Phase Shift Design:  IRS-Assisted ZFBF}

The transmission scheme of ZFBF is in fact a special case of OMA
as the spatial degrees of freedom are used for interference avoidance.
Here, we investigate the IRS-assisted ZFBF transmission
scheme, where the beamforming vectors and the IRS phase shift matrix are
optimized to achieve the minimum transmission power. 

In order to solve problem \eqref{P1-ZF}, we first optimize the beamforming
vectors,  $\boldsymbol{w}_{1}$ and $\boldsymbol{w}_{2}$, by
fixing $\boldsymbol{\Theta}$.  Lemma \ref{L-ZF-B} gives the optimal beamforming solutions.
\begin{lem}
\label{L-ZF-B}For any given $\boldsymbol{\Theta}$, the optimal beamforming
solutions to problem in \eqref{P1-ZF}  are
\begin{equation}
\begin{cases}
\boldsymbol{w}_{1}^{*}=\frac{\sqrt{r_{1}^{\min}}\boldsymbol{a}_{1}}{\left\Vert \boldsymbol{h}_{1}\right\Vert ^{2}\left\Vert \boldsymbol{h}_{2}\right\Vert ^{2}\sin^{2}\alpha},\\
\boldsymbol{w}_{2}^{*}=\frac{\sqrt{r_{2}^{\min}}\boldsymbol{a}_{2}}{\left\Vert \boldsymbol{h}_{1}\right\Vert ^{2}\left\Vert \boldsymbol{h}_{2}\right\Vert ^{2}\sin^{2}\alpha},
\end{cases}
\end{equation}
where 
\begin{equation}
\sin^{2}\alpha=1-\frac{\boldsymbol{h}_{1}^{H}\boldsymbol{h}_{2}\boldsymbol{h}_{2}^{H}\boldsymbol{h}_{1}}{\left\Vert \boldsymbol{h}_{1}\right\Vert ^{2}\left\Vert \boldsymbol{h}_{2}\right\Vert ^{2}},
\end{equation}
and
\begin{equation}
\begin{cases}
\boldsymbol{a}_{1}=\left\Vert \boldsymbol{h}_{2}\right\Vert ^{2}\boldsymbol{h}_{1}-\left(\boldsymbol{h}_{2}^{H}\boldsymbol{h}_{1}\right)\boldsymbol{h}_{2},\\
\boldsymbol{a}_{2}=-\left(\boldsymbol{h}_{1}^{H}\boldsymbol{h}_{2}\right)\boldsymbol{h}_{1}+\left\Vert \boldsymbol{h}_{1}\right\Vert ^{2}\boldsymbol{h}_{2}.
\end{cases}
\end{equation}
\end{lem}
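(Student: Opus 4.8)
The plan is to solve the ZFBF power-minimization problem for fixed $\boldsymbol{\Theta}$ explicitly. With $\boldsymbol{\Theta}$ fixed, the channels $\boldsymbol{h}_1,\boldsymbol{h}_2$ are fixed vectors, so problem \eqref{P1-ZF} becomes a convex QCQP in $\boldsymbol{w}_1,\boldsymbol{w}_2$ only. The zero-forcing constraints \eqref{interference avoidance} decouple the two beamforming vectors: $\boldsymbol{w}_1$ must satisfy $\boldsymbol{h}_2^H\boldsymbol{w}_1=0$ and deliver SNR $r_1^{\min}$ to user 1, while $\boldsymbol{w}_2$ must satisfy $\boldsymbol{h}_1^H\boldsymbol{w}_2=0$ and deliver SNR $r_2^{\min}$ to user 2. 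Since the objective $\|\boldsymbol{w}_1\|^2+\|\boldsymbol{w}_2\|^2$ separates, I can optimize each $\boldsymbol{w}_i$ independently.

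**Solving each subproblem.**
First I would treat $\boldsymbol{w}_1$. The subproblem is to minimize $\|\boldsymbol{w}_1\|^2$ subject to $\boldsymbol{h}_2^H\boldsymbol{w}_1=0$ and $|\boldsymbol{h}_1^H\boldsymbol{w}_1|^2=r_1^{\min}\sigma^2$ (the QoS constraint is active at the optimum). The minimum-norm vector lying in the null space of $\boldsymbol{h}_2$ that achieves a prescribed projection onto $\boldsymbol{h}_1$ is obtained by projecting $\boldsymbol{h}_1$ onto $\{\boldsymbol{h}_2\}^\perp$ and normalizing. Concretely, the relevant direction is $\boldsymbol{a}_1=\|\boldsymbol{h}_2\|^2\boldsymbol{h}_1-(\boldsymbol{h}_2^H\boldsymbol{h}_1)\boldsymbol{h}_2$, which is exactly $\|\boldsymbol{h}_2\|^2$ times the orthogonal projection of $\boldsymbol{h}_1$ away from $\boldsymbol{h}_2$; one checks directly that $\boldsymbol{h}_2^H\boldsymbol{a}_1=0$. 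The optimal $\boldsymbol{w}_1^*$ is then a scalar multiple of $\boldsymbol{a}_1$, and the scaling is fixed by forcing $\frac{|\boldsymbol{h}_1^H\boldsymbol{w}_1^*|^2}{\sigma^2}=r_1^{\min}$. Computing $\boldsymbol{h}_1^H\boldsymbol{a}_1=\|\boldsymbol{h}_1\|^2\|\boldsymbol{h}_2\|^2-|\boldsymbol{h}_1^H\boldsymbol{h}_2|^2=\|\boldsymbol{h}_1\|^2\|\boldsymbol{h}_2\|^2\sin^2\alpha$ gives the normalization that produces the stated denominator $\|\boldsymbol{h}_1\|^2\|\boldsymbol{h}_2\|^2\sin^2\alpha$. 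The same argument with indices swapped yields $\boldsymbol{w}_2^*$ along $\boldsymbol{a}_2$.

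**Verifying optimality and the explicit constants.**
To confirm these are genuinely optimal, I would argue that any feasible $\boldsymbol{w}_1$ decomposes as a component along $\boldsymbol{a}_1$ (inside $\{\boldsymbol{h}_2\}^\perp$, aligned with the achievable part of $\boldsymbol{h}_1$) plus an orthogonal remainder that contributes to $\|\boldsymbol{w}_1\|^2$ but not to $\boldsymbol{h}_1^H\boldsymbol{w}_1$; dropping that remainder strictly lowers the norm without violating any constraint, so the optimum is colinear with $\boldsymbol{a}_1$. Then the QoS constraint pins the magnitude uniquely (up to an irrelevant phase). The remaining routine task is the bookkeeping that reconciles the scalar $\sqrt{r_i^{\min}}$ and the $\sin^2\alpha$ factor with the normalization $\boldsymbol{h}_1^H\boldsymbol{a}_1=\|\boldsymbol{h}_1\|^2\|\boldsymbol{h}_2\|^2\sin^2\alpha$, matching the closed forms in the statement.

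**Main obstacle.**
The conceptual content is light; the only real care is the constant-tracking. The expression for $\boldsymbol{w}_i^*$ carries a factor $\sqrt{r_i^{\min}}$ in the numerator against $\|\boldsymbol{h}_1\|^2\|\boldsymbol{h}_2\|^2\sin^2\alpha$ in the denominator, and one must verify that substituting back reproduces $\frac{|\boldsymbol{h}_i^H\boldsymbol{w}_i^*|^2}{\sigma^2}=r_i^{\min}$ with the $\sigma^2$ correctly absorbed. I expect the main place to slip is ensuring the powers of $\|\boldsymbol{h}_1\|^2\|\boldsymbol{h}_2\|^2\sin^2\alpha$ in numerator versus denominator cancel as claimed, so I would perform that substitution carefully as the final check rather than trust the projection heuristic alone.
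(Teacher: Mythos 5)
Your argument is correct, and it reaches the same closed forms the paper does, but by a slightly different (more hands-on) route. The paper's own proof is a one-liner: it stacks the two zero-forcing constraints and the two active QoS constraints into the single linear system $\boldsymbol{H}^{H}[\boldsymbol{w}_{1},\boldsymbol{w}_{2}]=\mathrm{diag}(\sqrt{r_{1}^{\min}},\sqrt{r_{2}^{\min}})$ with $\boldsymbol{H}=[\boldsymbol{h}_{1},\boldsymbol{h}_{2}]$ and invokes the least-squares (minimum-norm) property of the Moore--Penrose inverse, then expands $\boldsymbol{H}^{H\dagger}$ explicitly to obtain $\boldsymbol{a}_{1},\boldsymbol{a}_{2}$ and the $\|\boldsymbol{h}_{1}\|^{2}\|\boldsymbol{h}_{2}\|^{2}\sin^{2}\alpha$ denominator. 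You instead decouple the problem into two independent minimum-norm subproblems and solve each by projecting $\boldsymbol{h}_{i}$ onto the orthogonal complement of the other channel, with an explicit "drop the orthogonal remainder" optimality argument. These are the same underlying fact — the columns of the pseudoinverse of $\boldsymbol{H}^{H}$ are exactly your projected directions — so nothing is gained or lost mathematically; your version is more self-contained and makes the optimality verification explicit, while the paper's is shorter at the cost of citing the pseudoinverse property as a black box. One remark on the bookkeeping you flag as the main risk: substituting the stated $\boldsymbol{w}_{1}^{*}$ back gives $|\boldsymbol{h}_{1}^{H}\boldsymbol{w}_{1}^{*}|^{2}=r_{1}^{\min}$, so the QoS constraint $|\boldsymbol{h}_{1}^{H}\boldsymbol{w}_{1}|^{2}/\sigma^{2}\geq r_{1}^{\min}$ is met with equality only under the normalization $\sigma^{2}=1$; strictly, a factor of $\sigma$ should appear in the numerator of each $\boldsymbol{w}_{i}^{*}$. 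This omission is present in the lemma statement and in the paper's own proof alike, so your careful-substitution instinct is well placed and would have caught it, but it is not a defect of your argument relative to the paper's.
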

\begin{proof}
See Appendix E.
\end{proof}
From Lemma \ref{L-ZF-B}, with a given $\boldsymbol{\Theta}$, the optimal
beamforming solutions to problem \eqref{P1-ZF} is characterized in
closed form expressions. Thereby, using the closed-form beamforming solutions,
we further optimize the IRS phase shift matrix and the corresponding
optimization problem is
\begin{subequations}
\begin{align}
\underset{\left\{ \boldsymbol{\Theta}\right\} }{\min}~ & W=\frac{\sigma^{2}}{\sin^{2}\alpha}\left(\frac{r_{1}^{\min}}{\left\Vert \boldsymbol{h}_{1}\right\Vert ^{2}}+\frac{r_{2}^{\min}}{\left\Vert \boldsymbol{h}_{2}\right\Vert ^{2}}\right),\label{P2-ZF}\\
\textrm{s.t.} & ~0\leq\theta_{i}\leq2\pi,i=1,\cdots,N.
\end{align}
\end{subequations}
Similarly, to solve problem \eqref{P2-ZF}, we also use the introduced
variables, i.e., $\boldsymbol{Q}=\tilde{\boldsymbol{v}}\tilde{\boldsymbol{v}}^{H}$
with $\widetilde{\boldsymbol{v}}=\left[\boldsymbol{v};1\right]$ and
$\boldsymbol{v}=[e^{j\theta_{1}},\cdots,e^{j\theta_{N}}]^{H}$. Then,
problem \eqref{P2-ZF} is rewritten as
\begin {subequations}
\begin{align}
\underset{\left\{ \boldsymbol{Q}\succeq\boldsymbol{0}\right\} }{\min}~ & W=\frac{\sigma^{2}}{\sin^{2}\alpha}\left(\frac{r_{1}^{\min}}{\textrm{Tr}\left(\boldsymbol{Q}\boldsymbol{\Upsilon}_{1}\right)}+\frac{r_{2}^{\min}}{\textrm{Tr}\left(\boldsymbol{Q}\boldsymbol{\Upsilon}_{1}\right)}\right),\label{P2-ZF-1}\\
\textrm{s.t.} & ~\boldsymbol{Q}_{i,i}=1,i=1,\cdots,N+1,\\
 & ~\textrm{Rank}\left(\boldsymbol{Q}\right)=1,
\end{align}
\end {subequations}
where 
\begin{equation}
\sin^{2}\alpha=1-\frac{\textrm{Tr}\left(\boldsymbol{Q}\boldsymbol{R}\right)\textrm{Tr}\left(\boldsymbol{R}^{H}\boldsymbol{Q}^{H}\right)}{\textrm{ Tr}\left(\boldsymbol{Q}\boldsymbol{\Upsilon}_{1}\right)\textrm{ Tr}\left(\boldsymbol{Q}\boldsymbol{\Upsilon}_{2}\right)}.
\end{equation}
The objective function in \eqref{P2-ZF-1} is a nonconvex and  fractional objective, which is rewritten as
\begin{align}
W & =\frac{\sigma^{2}}{\sin^{2}\alpha}\left(\frac{r_{1}^{\min}}{\textrm{Tr}\left(\boldsymbol{Q}\boldsymbol{\Upsilon}_{1}\right)}+\frac{r_{2}^{\min}}{\textrm{Tr}\left(\boldsymbol{Q}\boldsymbol{\Upsilon}_{2}\right)}\right)\nonumber \\
 & =\frac{\sigma^{2}r_{1}^{\min}\textrm{Tr}\left(\boldsymbol{Q}\boldsymbol{\Upsilon}_{2}\right)+\sigma^{2}r_{2}^{\min}\textrm{Tr}\left(\boldsymbol{Q}\boldsymbol{\Upsilon}_{1}\right)}{\textrm{Tr}\left(\boldsymbol{Q}\boldsymbol{\Upsilon}_{1}\right)\textrm{Tr}\left(\boldsymbol{Q}\boldsymbol{\Upsilon}_{2}\right)-\textrm{Tr}\left(\boldsymbol{Q}\boldsymbol{R}\right)\textrm{Tr}\left(\boldsymbol{R}^{H}\boldsymbol{Q}^{H}\right)}.\label{P_ZF}
\end{align}
 In order to solve the problem in \eqref{P2-ZF-1}, we introduce the following objective
function:
\begin{equation}
G\left(\boldsymbol{Q},\eta\right)=G_{1}\left(\boldsymbol{Q}\right)-\eta G_{2}\left(\boldsymbol{Q}\right),\label{G}
\end{equation}
where $\eta$ is a positive parameter, 
\begin{equation}
G_{1}\left(\boldsymbol{Q}\right)=\sigma^{2}r_{1}^{\min}\textrm{Tr}\left(\boldsymbol{Q}\boldsymbol{\Upsilon}_{2}\right)+\sigma^{2}r_{2}^{\min}\textrm{Tr}\left(\boldsymbol{Q}\boldsymbol{\Upsilon}_{1}\right),
\end{equation}
and 
\begin{equation}
G_{2}\left(\boldsymbol{Q}\right)=\textrm{Tr}\left(\boldsymbol{Q}\boldsymbol{\Upsilon}_{1}\right)\textrm{Tr}\left(\boldsymbol{Q}\boldsymbol{\Upsilon}_{2}\right)-\textrm{Tr}\left(\boldsymbol{Q}\boldsymbol{R}\right)\textrm{Tr}\left(\boldsymbol{R}^{H}\boldsymbol{Q}^{H}\right).
\end{equation}
 We then study the following optimization problem with given $\eta$:
 \begin {subequations}
\begin{align}
\underset{\left\{ \boldsymbol{Q}\succeq\boldsymbol{0}\right\} }{\min}~ & G\left(\boldsymbol{Q},\eta\right)=G_{1}\left(\boldsymbol{Q}\right)-\eta G_{2}\left(\boldsymbol{Q}\right),\label{P3-ZF}\\
\textrm{s.t.}~ & \boldsymbol{Q}_{i,i}=1,i=1,\cdots,N+1,\\
 & \textrm{Rank}\left(\boldsymbol{Q}\right)=1,
\end{align}
\end {subequations}
The relation between problem \eqref{P2-ZF-1} and \eqref{P3-ZF} is
given by the following lemma.
\begin{lem}
\label{L-DK}\cite{dinkelbach1967nonlinear}Let
$G^{*}\left(\eta\right)$ be the optimal objective value of problem
\eqref{P3-ZF} and $\boldsymbol{Q}^{*}\left(\eta\right)$ be the optimal
solution of problem \eqref{P3-ZF}. Then, $\boldsymbol{Q}^{*}\left(\eta\right)$
is the optimal solution to \eqref{P2-ZF-1} if and only if $G^{*}\left(\eta\right)=0$. 
\end{lem}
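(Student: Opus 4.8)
The plan is to recognize Lemma \ref{L-DK} as the parametric (Dinkelbach) characterization of the single-ratio program \eqref{P2-ZF-1}, and to verify the two structural facts on which that characterization rests for our specific $G_1,G_2$ defined in \eqref{G}. First I would record that, on the feasible set $\mathcal{S}$ given by $\boldsymbol{Q}\succeq\boldsymbol{0}$, $\boldsymbol{Q}_{i,i}=1$, and $\textrm{Rank}(\boldsymbol{Q})=1$, both $G_1$ and $G_2$ are strictly positive. Indeed $G_1$ is a nonnegative combination of $\textrm{Tr}(\boldsymbol{Q}\boldsymbol{\Upsilon}_k)=\|\boldsymbol{h}_k\|^2$, while
\begin{equation}
G_2(\boldsymbol{Q})=\textrm{Tr}(\boldsymbol{Q}\boldsymbol{\Upsilon}_1)\textrm{Tr}(\boldsymbol{Q}\boldsymbol{\Upsilon}_2)-\textrm{Tr}(\boldsymbol{Q}\boldsymbol{R})\textrm{Tr}(\boldsymbol{R}^{H}\boldsymbol{Q}^{H})=\|\boldsymbol{h}_1\|^2\|\boldsymbol{h}_2\|^2\sin^2\alpha,
\end{equation}
which is strictly positive by Cauchy--Schwarz whenever the two effective channels are not parallel. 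This positivity is the hinge of the whole argument.

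$(\Leftarrow)$ Assume $G^*(\eta)=0$. Since $\boldsymbol{Q}^*(\eta)$ minimizes $G(\cdot,\eta)$ over $\mathcal{S}$, we have $G_1(\boldsymbol{Q})-\eta G_2(\boldsymbol{Q})\ge G^*(\eta)=0$ for every feasible $\boldsymbol{Q}$, with equality at $\boldsymbol{Q}^*(\eta)$. Dividing through by $G_2>0$ gives $G_1(\boldsymbol{Q})/G_2(\boldsymbol{Q})\ge\eta$ for all feasible $\boldsymbol{Q}$ and $G_1(\boldsymbol{Q}^*(\eta))/G_2(\boldsymbol{Q}^*(\eta))=\eta$; hence $\boldsymbol{Q}^*(\eta)$ attains the minimal ratio and solves \eqref{P2-ZF-1}.

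$(\Rightarrow)$ For the converse I would first establish that $F(\eta):=G^*(\eta)$ is strictly decreasing (and continuous) in $\eta$. Taking $\eta_1<\eta_2$ and letting $\boldsymbol{Q}_1$ minimize $G(\cdot,\eta_1)$,
\begin{equation}
F(\eta_2)\le G_1(\boldsymbol{Q}_1)-\eta_2 G_2(\boldsymbol{Q}_1)=F(\eta_1)-(\eta_2-\eta_1)G_2(\boldsymbol{Q}_1)<F(\eta_1),
\end{equation}
the last inequality because $G_2(\boldsymbol{Q}_1)>0$; thus $F$ is strictly monotone and has at most one zero. If $\boldsymbol{Q}^*(\eta)$ solves \eqref{P2-ZF-1} with optimal value $\bar\eta=G_1(\boldsymbol{Q}^*(\eta))/G_2(\boldsymbol{Q}^*(\eta))$, then $G_1(\boldsymbol{Q})-\bar\eta G_2(\boldsymbol{Q})\ge 0$ for every feasible $\boldsymbol{Q}$ with equality at $\boldsymbol{Q}^*(\eta)$, so $F(\bar\eta)=0$; because the parametric scheme keeps $\eta$ equal to the ratio $\bar\eta$ attained at the returned minimizer, we conclude $G^*(\eta)=0$.

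The step I expect to require the most care is the strict monotonicity and continuity of $F$, because $\mathcal{S}$ carries the nonconvex constraint $\textrm{Rank}(\boldsymbol{Q})=1$. I would emphasize that Dinkelbach's equivalence is a statement about the \emph{ratio structure} of the objective alone and is insensitive to the convexity of $\mathcal{S}$, so the displayed argument goes through verbatim on the nonconvex feasible set. The second delicate point, in the $(\Rightarrow)$ direction, is to keep $\eta$ self-consistent with the ratio attained at the returned minimizer; this is precisely the bookkeeping maintained by the iterative update, so I would phrase the claim for that self-consistent $\eta$. The remaining verifications---positivity of $G_1,G_2$ and the Cauchy--Schwarz identity for $G_2$---are routine.
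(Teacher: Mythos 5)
Your proof is correct, but it is worth noting that the paper does not actually prove this lemma at all: it is invoked as a known result and delegated entirely to the citation of Dinkelbach's 1967 paper. What you have written is a self-contained reconstruction of the parametric-transform equivalence, and the two load-bearing observations you isolate are exactly the right ones: strict positivity of $G_{2}$ on the feasible set (via $G_{2}(\boldsymbol{Q})=\|\boldsymbol{h}_{1}\|^{2}\|\boldsymbol{h}_{2}\|^{2}\sin^{2}\alpha>0$ by Cauchy--Schwarz for non-parallel effective channels), which makes the division in the $(\Leftarrow)$ direction and the strict decrease of $F(\eta)$ legitimate; and the insensitivity of the whole argument to the non-convexity of the rank-one feasible set, which is the point a reader might otherwise worry about since \eqref{P3-ZF} is itself non-convex. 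Two small refinements would tighten it further. First, the minima $G^{*}(\eta)$ are attained because the feasible set is the continuous image of the compact torus $\{\boldsymbol{v}:|v_{i}|=1\}$ under $\boldsymbol{v}\mapsto\widetilde{\boldsymbol{v}}\widetilde{\boldsymbol{v}}^{H}$; this is worth one sentence since strict monotonicity of $F$ presupposes that minimizers exist. Second, you are right that the $(\Rightarrow)$ direction is, as literally stated in the lemma, slightly too strong --- for an arbitrary $\eta$ the minimizer of $G(\cdot,\eta)$ can coincide with a ratio-minimizer without $G^{*}(\eta)$ vanishing (a one-point feasible set already shows this) --- and your restriction to the self-consistent $\eta=G_{1}(\boldsymbol{Q}^{*})/G_{2}(\boldsymbol{Q}^{*})$ maintained by Algorithm 3 is the honest way to read the claim. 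Relative to the paper's bare citation, your version buys transparency about which hypotheses ($G_{2}>0$, compactness, nothing about convexity) the equivalence actually needs in this setting.
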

According to Lemma \ref{L-DK}, the optimal solution to \eqref{P2-ZF-1}
can be found by solving problem \eqref{P3-ZF} parameterized by $\eta$
and then updating $\eta$ until $G^{*}\left(\eta\right)=0$. Therefore, we first focus on solving \eqref{P3-ZF} for a given $\eta$
. 

For a  given $\eta$, the non-convexity of problem \eqref{P3-ZF} lies on  the rank constraint
and part of the  objective function $G_{2}\left(\boldsymbol{Q},\eta\right)$.
In the following, for a  given $\eta$, we will solve the nonconvex
    problem \eqref{P3-ZF} by using successive approximation (SCA) \cite{kurtaran2002crashworthiness} and SDR.

Firstly, we construct global under-estimators of $G_{2}\left(\boldsymbol{Q},\eta\right)$.
Specifically, for any feasible point $\boldsymbol{Q}^{i}$, the function
$G_{2}\left(\boldsymbol{Q},\eta\right)$ satisfies the following inequality:
\begin{align}
G_{2}\left(\boldsymbol{Q},\eta\right) & \geq G_{2}\left(\boldsymbol{Q}^{i}\right)+\textrm{Tr}\left(\left(\nabla_{\boldsymbol{Q}}G_{2}\left(\boldsymbol{Q}^{i}\right)\right)^{H}\left(\boldsymbol{Q}-\boldsymbol{Q}^{i}\right)\right)\nonumber \\
 & =\tilde{G_{2}}\left(\boldsymbol{Q},\boldsymbol{Q}^{i}\right),
\end{align}
where 
\begin{equation}
\nabla_{\boldsymbol{Q}}G_{2}\left(\boldsymbol{Q}^{i}\right)=\boldsymbol{\Upsilon}_{1}^{H}\textrm{Tr}\left(\boldsymbol{Q}^{i}\boldsymbol{\Upsilon}_{2}\right)+\boldsymbol{\Upsilon}_{2}^{H}\textrm{Tr}\left(\boldsymbol{Q}^{i}\boldsymbol{\Upsilon}_{1}\right)-2\boldsymbol{R}^{H}\textrm{Tr}\left(\boldsymbol{Q}^{i}\boldsymbol{R}\right).
\end{equation}
Therefore, for any given $\boldsymbol{Q}^{i}$, an upper bound of
problem \eqref{P3-ZF} can be obtained by solving the following optimization
problem 
\begin {subequations}
\begin{align}
\underset{\left\{ \boldsymbol{\boldsymbol{Q}\succeq0}\right\} }{\min}~ & G\left(\boldsymbol{Q},\eta\right)=G_{1}\left(\boldsymbol{Q}\right)-\eta\tilde{G_{2}}\left(\boldsymbol{Q},\boldsymbol{Q}^{i}\right),\label{P5-ZF}\\
\textrm{s.t.}~ & \boldsymbol{Q}_{i,i}=1,i=1,\cdots,N+1,\\
 & ~\textrm{Rank}\left(\boldsymbol{Q}\right)=1.\label{rank}
\end{align}
\end{subequations}
Optimization problem \eqref{P5-ZF} 's  non-convexity only lies on the rank constraint \eqref{rank}.
To tackle this issue, we remove the rank constraint \eqref{rank}
by applying SDR. Then, the relaxed version of \eqref{P5-ZF} can be
efficiently solved via standard optimization tools, such as CVX. 

Note that  \eqref{P5-ZF} serves as an upper bound of
\eqref{P3-ZF}. The process of SCA is summarized in Algorithm 2, where
we iteratively tighten the upper bound and obtain a sequence of solutions,
i.e., $\boldsymbol{Q}$. 

\begin{algorithm}
\caption{Successive Convex Approximation Algorithm}

1:\textbf{ Initialize} iteration index $i=1$ and a feasible $\boldsymbol{Q}^{1}$
.

2:\textbf{ Repeat}

3: \hspace*{1cm}Solve problem \eqref{P5-ZF} for a given $\boldsymbol{Q}^{i}$
and store the solution $\boldsymbol{Q}$;

4: \hspace*{1cm}Set $i=i+1$ and $\boldsymbol{Q}^{i}=\boldsymbol{Q}$;

5:\textbf{ Until} The value of objective function in \eqref{P5-ZF} converges or the maximum number of iterations is reached.

6: $\boldsymbol{Q}^{*}=\boldsymbol{Q}^{i}$. 
\end{algorithm}

After the solution to problem \eqref{P3-ZF} is obtained, we shall
find an $\eta$ such that $G^{*}\left(\eta\right)=0$, which can be
achieved by Algorithm 3.  In Algorithm 3, $\delta$ denotes a small positive threshold. The  algorithm is guaranteed to converge to the desirable
$\eta$. Finally, we can achieve the solution of IRS elements by using
the SVD of $\boldsymbol{Q}^{*}$ and then the beamforming vectors
are obtained using Lemma \ref{L-ZF-B}.

\begin{algorithm}
\caption{The solution to problem \eqref{P3-ZF}}

1:\textbf{ Initialize} $\eta_{ini}=0$, $G^{*}\left(\eta\right)=\infty$, 
and precision $\delta>0$.

2:\textbf{ While $\left|G^{*}\left(\eta\right)\right|>\delta$ do }

3: \hspace*{1cm}Find the solution $\boldsymbol{Q}^{*}$ using Algorithm
2;

4: \hspace*{1cm}Calculate $G^{*}\left(\eta\right)$; 

5:\textbf{ }\hspace*{1cm}Update $\eta=\frac{G_{1}\left(\boldsymbol{Q}^{*}\right)}{G_{2}\left(\boldsymbol{Q}^{*}\right)}$; 

6: \textbf{Return} $\eta$ and $\boldsymbol{Q}^{*}$. 
\end{algorithm}

\section{\label{sec:The-Comparison-between}Comparison of IRS-Assisted
NOMA and IRS-Assisted ZFBF}

From the previous sections, we have respectively investigated the
joint optimization of beamforming vectors and the IRS phase shift matrix
for IRS-assisted NOMA and IRS-assisted ZFBF.  The application
of IRS in multi-antenna systems facilitates the implementation of
NOMA and ZFBF, which is because the directions of users' channel vectors
can be effectively aligned. In this section, we further study the
comparison between schemes of IRS-assisted NOMA and IRS-assisted ZFBF.

According to Lemma \ref{L-optimal beamforming}, with improved quasi-degraded
condition, the minimum transmission power achieved by NOMA is 
\begin{align}
P^{NOMA} & =\phi_{1}^{2}\left(\left(1+r_{2}^{\min}\right)^{2}-\left(2+r_{2}^{\min}\right)r_{2}^{\min}\cos^{2}\alpha\right)+\phi_{2}^{2}\nonumber \\
 & =\frac{r_{1}^{\min}\left(1+r_{2}^{\min}\right)\sigma^{2}}{\left\Vert \boldsymbol{h}_{1}\right\Vert ^{2}\left(1+r_{2}^{\min}\sin^{2}\alpha\right)}+\frac{r_{2}^{\min}\sigma^{2}}{\left\Vert \boldsymbol{h}_{2}\right\Vert ^{2}}.
\end{align}
In addition, from Lemma \ref{L-ZF-B}, the minimum transmission power
achieved by ZFBF is given by 
\begin{equation}
P^{ZFBF}=\frac{\sigma^{2}}{\sin^{2}\alpha}\left(\frac{r_{1}^{\min}}{\left\Vert \boldsymbol{h}_{1}\right\Vert ^{2}}+\frac{r_{2}^{\min}}{\left\Vert \boldsymbol{h}_{2}\right\Vert ^{2}}\right).
\end{equation}
In the following, we compare $P^{NOMA}$ and $P^{ZFBF}$.
\begin{thm}
\label{P-p_noma<=00003Dp_zf}Given the same IRS phase shift matrix
$\boldsymbol{\Theta}$ and improved quasi-degraded condition, we always
have 
\begin{equation}
P^{NOMA}\leq P^{ZFBF}.
\end{equation}
\end{thm}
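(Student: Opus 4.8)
**The plan is to prove $P^{NOMA} \leq P^{ZFBF}$ by directly comparing the two closed-form power expressions, reducing the inequality to an elementary statement about $\cos^2\alpha$, $\sin^2\alpha$, and the two SNR targets.** Since both quantities are given in closed form (from Lemma \ref{L-optimal beamforming} and Lemma \ref{L-ZF-B}) under the same $\boldsymbol{\Theta}$, the channel norms $\|\boldsymbol{h}_1\|^2$, $\|\boldsymbol{h}_2\|^2$ and the angle $\alpha$ are identical on both sides. Writing $a = \frac{r_1^{\min}\sigma^2}{\|\boldsymbol{h}_1\|^2}$ and $b = \frac{r_2^{\min}\sigma^2}{\|\boldsymbol{h}_2\|^2}$ for brevity, the inequality to establish is
\begin{equation}
\frac{a(1+r_2^{\min})}{1+r_2^{\min}\sin^2\alpha} + b \;\leq\; \frac{a+b}{\sin^2\alpha}. \nonumber
\end{equation}

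First I would handle the $b$-terms and the $a$-terms separately, since they decouple cleanly. For the term proportional to $b$, the claim $b \leq \frac{b}{\sin^2\alpha}$ is immediate because $\sin^2\alpha \in (0,1]$. For the term proportional to $a$, after moving everything over and clearing the positive denominators $(1+r_2^{\min}\sin^2\alpha)$ and $\sin^2\alpha$, the remaining inequality is
\begin{equation}
(1+r_2^{\min})\sin^2\alpha \;\leq\; 1 + r_2^{\min}\sin^2\alpha, \nonumber
\end{equation}
which rearranges to $r_2^{\min}\sin^2\alpha \leq r_2^{\min} + (1-\sin^2\alpha)$, i.e. $r_2^{\min}(1-\sin^2\alpha) \geq -(1-\sin^2\alpha)$; since $1-\sin^2\alpha = \cos^2\alpha \geq 0$ and $r_2^{\min}\geq 0$, this holds trivially. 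Summing the two verified pieces gives the result.

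The only subtlety — and the step I would treat most carefully — is the boundary case $\sin^2\alpha = 0$, i.e. perfectly aligned channels ($\cos^2\alpha = 1$). There the ZFBF power $P^{ZFBF}$ diverges (division by $\sin^2\alpha$), reflecting that zero-forcing cannot separate colinear users, whereas $P^{NOMA}$ stays finite. The inequality is then still valid in the extended sense ($P^{NOMA} < \infty = P^{ZFBF}$), and the feasibility of $P^{ZFBF}$ requires $\sin^2\alpha > 0$ so that both $\boldsymbol{a}_1,\boldsymbol{a}_2$ in Lemma \ref{L-ZF-B} are well-defined; for $\sin^2\alpha \in (0,1]$ the two elementary bounds above are strict or tight and the conclusion follows. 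I expect no genuine obstacle here: the hard analytic work was already absorbed into deriving the closed-form powers, and what remains is a short algebraic verification that the NOMA scheme's power advantage is governed entirely by the factor $\frac{1+r_2^{\min}}{1+r_2^{\min}\sin^2\alpha} \leq \frac{1}{\sin^2\alpha}$, with equality exactly when $\cos^2\alpha = 0$ (orthogonal channels), at which point NOMA and ZFBF coincide.
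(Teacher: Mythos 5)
Your proposal is correct and follows essentially the same route as the paper's Appendix F: both split $P^{NOMA}$ into the $r_1^{\min}$- and $r_2^{\min}$-terms, bound the latter by $1\le 1/\sin^2\alpha$, and bound the former via the key inequality $1+r_2^{\min}\sin^2\alpha\ge(1+r_2^{\min})\sin^2\alpha$, which is exactly your $(1+r_2^{\min})\sin^2\alpha\le 1+r_2^{\min}\sin^2\alpha$. Your extra remark on the degenerate case $\sin^2\alpha=0$ is a sensible refinement the paper omits, but it does not change the argument.
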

\begin{proof}
See Appendix F.

According to Proposition \ref{P-p_noma<=00003Dp_zf}, given the same
IRS phase shift matrix and the improved quasi-degraded condition,
the scheme of NOMA always achieves better performance than ZFBF. In
addition, in the following, we consider a specific situation where
the channels of user 1 and user 2 are orthogonal, i.e., $\boldsymbol{h}_{1}^{H}\boldsymbol{h}_{2}=0$.
\end{proof}
\begin{prop}
\label{P_ZF_condition}The improved quasi-degradation condition is
invalid when  $\boldsymbol{h}_{1}^{H}\boldsymbol{h}_{2}=0$
, which can be satisfied if and only if 
\begin{equation}
\boldsymbol{R}=\left[\begin{array}{cc}
\boldsymbol{\Phi}_{1}\boldsymbol{\Phi}_{2}^{H} & \mathbf{h}_{d1}^{H}\boldsymbol{\Phi}_{2}^{H}\\
\boldsymbol{\Phi}_{1}\mathbf{h}_{d2} & \mathbf{h}_{d1}^{H}\mathbf{h}_{d2}
\end{array}\right]=\boldsymbol{0},\label{Q=00003D0}
\end{equation}
 and ZFBF achieves the optimal value.
\end{prop}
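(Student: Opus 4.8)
The plan is to establish the three assertions in turn, and I would start with the infeasibility claim, which is routine. Since $\cos^2\alpha=\boldsymbol{h}_1^H\boldsymbol{h}_2\boldsymbol{h}_2^H\boldsymbol{h}_1/(\|\boldsymbol{h}_1\|^2\|\boldsymbol{h}_2\|^2)$, the hypothesis $\boldsymbol{h}_1^H\boldsymbol{h}_2=0$ forces $\cos^2\alpha=0$. Substituting this into the left-hand side of the quasi-degradation constraint \eqref{quasi-degradation}, the leading term $(1+r_1^{\min})/\cos^2\alpha$ diverges to $+\infty$ while the subtracted term $r_1^{\min}\cos^2\alpha/(1+r_2^{\min}(1-\cos^2\alpha))^2$ tends to $0$; hence the left-hand side is unbounded and can never be dominated by the finite ratio $\|\boldsymbol{h}_1\|^2/\|\boldsymbol{h}_2\|^2$. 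Thus \eqref{quasi-degradation} has no feasible point, the improved quasi-degradation condition cannot hold, and NOMA loses its equivalence with DPC.

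Next I would treat the equivalence using the trace identity \eqref{h1'*h2-2}, namely $\boldsymbol{h}_1^H\boldsymbol{h}_2=\textrm{Tr}(\boldsymbol{Q}\boldsymbol{R})=\tilde{\boldsymbol{v}}^H\boldsymbol{R}\tilde{\boldsymbol{v}}$ with $\boldsymbol{Q}=\tilde{\boldsymbol{v}}\tilde{\boldsymbol{v}}^H$ and $\boldsymbol{R}$ given in \eqref{R-1}. The sufficiency direction is immediate: if $\boldsymbol{R}=\boldsymbol{0}$ then $\textrm{Tr}(\boldsymbol{Q}\boldsymbol{R})=0$ for every admissible $\boldsymbol{\Theta}$, so the two channels are orthogonal irrespective of the phase shifts, which is exactly the ideal configuration for ZFBF. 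For the converse I would read $\boldsymbol{h}_1^H\boldsymbol{h}_2=0$ as a requirement over the whole design, expand $\tilde{\boldsymbol{v}}^H\boldsymbol{R}\tilde{\boldsymbol{v}}$ as a trigonometric polynomial in the independent phases $\theta_1,\dots,\theta_N$ (with $\tilde v_{N+1}=1$ and $\tilde v_k=e^{j\theta_k}$) and demand that it vanish identically. Matching the distinct characters $e^{j(\theta_l-\theta_k)}$ and $e^{\pm j\theta_k}$ against the constant mode then annihilates the off-diagonal blocks $\boldsymbol{\Phi}_1\mathbf{h}_{d2}$, $\mathbf{h}_{d1}^H\boldsymbol{\Phi}_2^H$ and the off-diagonal entries of $\boldsymbol{\Phi}_1\boldsymbol{\Phi}_2^H$, and forces $\textrm{Tr}(\boldsymbol{R})=0$.

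For the optimality of ZFBF I would establish a scheme-independent lower bound on the transmit power. By Cauchy-Schwarz with the QoS constraint \eqref{user_1_qos}, any feasible beamformer obeys $\|\boldsymbol{w}_1\|^2\geq|\boldsymbol{h}_1^H\boldsymbol{w}_1|^2/\|\boldsymbol{h}_1\|^2\geq r_1^{\min}\sigma^2/\|\boldsymbol{h}_1\|^2$, and since the interference term in the denominator of $\textrm{SINR}_{2,2}$ is nonnegative, constraint \eqref{user 2_qos} forces $|\boldsymbol{h}_2^H\boldsymbol{w}_2|^2\geq r_2^{\min}\sigma^2$, whence $\|\boldsymbol{w}_2\|^2\geq r_2^{\min}\sigma^2/\|\boldsymbol{h}_2\|^2$. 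Summing yields a lower bound that is met with equality by the closed-form $P^{ZFBF}$ of Lemma \ref{L-ZF-B} evaluated at $\sin^2\alpha=1$; since ZFBF attains both targets with zero cross-interference, it reaches this bound and is therefore globally optimal in the orthogonal regime.

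The hard part will be the converse in the equivalence. A single rank-one $\boldsymbol{Q}$ with unit diagonal does not by itself force $\boldsymbol{R}=\boldsymbol{0}$, so the ``only if'' must be understood as orthogonality persisting for every admissible $\boldsymbol{\Theta}$; even under that reading the Fourier-matching argument only delivers vanishing off-diagonals together with $\textrm{Tr}(\boldsymbol{R})=0$, so closing the gap to the full $\boldsymbol{R}=\boldsymbol{0}$ will require exploiting the concrete block structure of $\boldsymbol{R}$ in \eqref{R-1} to pin down the remaining diagonal entries. I would therefore make the ``for all $\boldsymbol{\Theta}$'' quantifier explicit and verify this last step carefully before asserting the stated equivalence.
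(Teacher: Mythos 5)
Your proposal is correct where the paper's proof is correct, and in two places it is actually more careful. The infeasibility step and the sufficiency direction ($\boldsymbol{R}=\boldsymbol{0}\Rightarrow\boldsymbol{h}_1^H\boldsymbol{h}_2=0$ for every $\boldsymbol{\Theta}$) coincide with the paper's Appendix G. For the optimality of ZFBF you diverge: the paper simply evaluates the closed-form ZFBF power at $\sin^2\alpha=1$ and observes it is no larger than the general-$\alpha$ expressions $P^{ZFBF}$ and $P^{NOMA}$, whereas you derive a scheme-independent lower bound $\|\boldsymbol{w}_1\|^2+\|\boldsymbol{w}_2\|^2\geq\sigma^2\bigl(r_1^{\min}/\|\boldsymbol{h}_1\|^2+r_2^{\min}/\|\boldsymbol{h}_2\|^2\bigr)$ via Cauchy--Schwarz and show ZFBF attains it. Your version is the stronger statement (optimality over all feasible beamformers, not just relative to the two parametric families) and I would keep it.

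On the necessity direction, the gap you flag is genuine, and it is present in the paper itself: the published proof writes $\boldsymbol{h}_1^H\boldsymbol{h}_2=\widetilde{\boldsymbol{v}}^H\boldsymbol{R}\widetilde{\boldsymbol{v}}=0$ and concludes ``since $\widetilde{\boldsymbol{v}}\neq\boldsymbol{0}$, $\boldsymbol{R}$ is a zero matrix,'' which is a non sequitur --- a single isotropic vector of a non-Hermitian form does not annihilate the matrix. Your Fourier-matching reading (orthogonality for \emph{all} admissible $\boldsymbol{\Theta}$) is the only sensible interpretation under which the claim could hold, and even then, as you correctly compute, identical vanishing of $\sum_{k,l}R_{kl}e^{j(\theta_l-\theta_k)}$ on the torus with $\theta_{N+1}=0$ only forces the off-diagonal entries of $\boldsymbol{R}$ to vanish together with $\mathrm{Tr}(\boldsymbol{R})=0$; the individual diagonal entries of $\boldsymbol{\Phi}_1\boldsymbol{\Phi}_2^H$ and the scalar $\mathbf{h}_{d1}^H\mathbf{h}_{d2}$ need not vanish separately, and nothing in the block structure of $\boldsymbol{R}$ in \eqref{R-1} rules out a nontrivial cancellation for generic channels. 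So the ``only if'' as stated is not established by either your argument or the paper's; the honest conclusion is that $\boldsymbol{R}=\boldsymbol{0}$ is sufficient, and necessity should either be weakened to ``all off-diagonal entries of $\boldsymbol{R}$ vanish and $\mathrm{Tr}(\boldsymbol{R})=0$'' or be accompanied by an additional argument specific to the channel model. Your decision to make the quantifier explicit and to defer asserting the full equivalence is the right call.
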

\begin{proof}
See Appendix G.
\end{proof}
From Proposition \ref{P_ZF_condition}, the IRS phase shift matrix
can be found to make the channels of user 1 and user 2 orthogonal
only with condition \eqref{Q=00003D0}. In practice, the condition \eqref{Q=00003D0}
is very difficult to satisfy due to the random $\mathbf{h}_{rk}$,
$\mathbf{h}_{dk}$, for $k=1,2$, and $\mathbf{G}$. However, as shown
in Subsection \ref{subsec:The-feasibility-of}, the improved quasi-degradation
condition for MISO NOMA is more easily satisfied, especially
for cases where the IRS is located close to user 1.
\begin{rem}
Based on the provided results, the hybrid NOMA precoding scheme can
be given as follows:
\end{rem}
\begin{itemize}
\item Given the improved quasi-degradation condition, which  can be satisfied
with high possibility, NOMA achieves the optimal performance and hence
we prefer to use NOMA transmission scheme.
\item If $\boldsymbol{R}=\boldsymbol{0}$, under which the IRS phase shift
matrix can be found to satisfy $\boldsymbol{h}_{1}^{H}\boldsymbol{h}_{2}=0$,
the ZFBF is preferred. 
\item Assume a case where the improved quasi-degradation condition is violated
and $\boldsymbol{R}\neq\boldsymbol{0}$, the performance loss is inevitable.
Considering the computational complexity, one may prefer to use ZFBF as the closed form beamforming cannot be obtained by using
NOMA. 
\end{itemize}

\section{\label{sec:Simulation-Results}Simulation Results}

In this section, the performance of the proposed solution to the MTP
problem is evaluated. In our simulation, the BS is located at the
point $\left(0\textrm{m},0\textrm{m}\right)$, and single-antenna
users are randomly placed in the half right-hand side square $\left(200\textrm{m}\times200\textrm{m}\right)$
around the BS. The channel between the IRS and user $k$ is given
by $\mathbf{h}_{rk}=d_{rk}^{-\alpha}\mathbf{g}_{rk}$, where $d_{rk}$
is the distance between user $k$ and the BS, $\alpha=3$ is the path
loss exponent, $\mathbf{g}_{rk}$ follows a Rayleigh distribution.
Similarly, the channel between the BS and the IRS is $\mathbf{G}=d_{r}^{-\alpha}\mathbf{g}_{r}$,
and the direct channel between the BS and user $k$ is $\mathbf{h}_{dk}=d_{k}^{-\alpha}\mathbf{g}_{k}$. The noise
power is given by $\sigma^{2}=BN_{0}$ with bandwidth $B=10\textrm{MHz}$
and the noise power spectral density $N_{0}=-174$dBm. The target
rates for both users are the same ($R_{k}^{\min}=1\textrm{bps/Hz}$
for $k=1,2$). 

\begin{figure}
\centering
\includegraphics[scale=0.7]{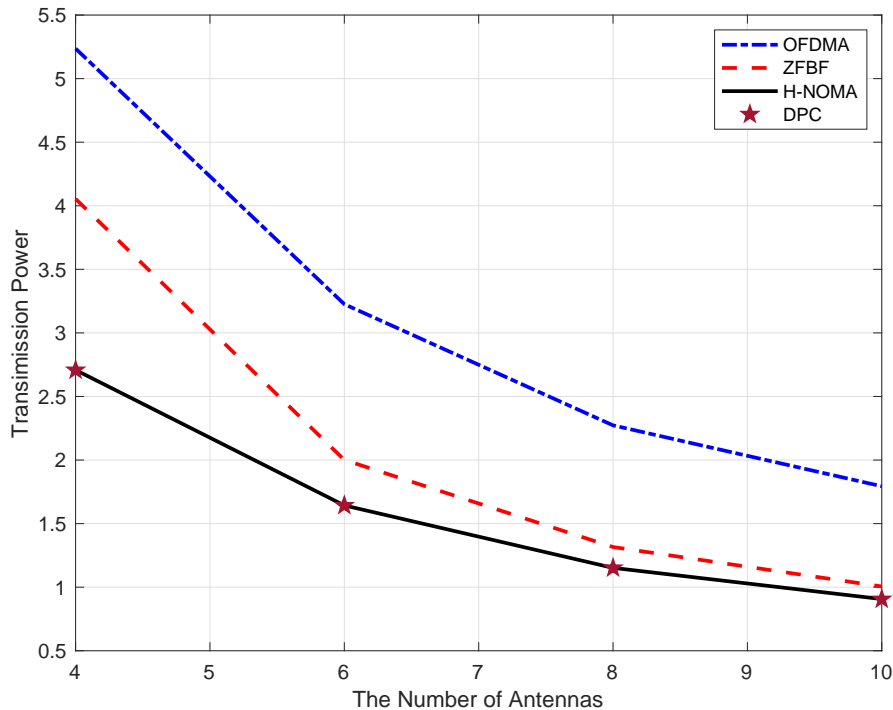}\caption{Transmission power versus the number of antennas.}
\end{figure}

Fig. 4 depicts the BS transmission power versus the number of antennas
respectively using the schemes of hybrid NOMA (H-NOMA), ZFBF, and OFDMA. In Fig. 4,
the IRS is located at the point $\left(20\textrm{m},20\textrm{m}\right)$
and the number of IRS elements is $N=10$. Here, note that all the
schemes are assisted by IRS. The minimum transmission powers achieved
by NOMA and ZFBF use our provided solution. In the IRS-assisted OFDMA
scheme, we first optimize the beamforming with given $\boldsymbol{\Theta}$
and then further optimize $\boldsymbol{\Theta}$ using SDP. As expected,
for the three schemes, the total transmission power decreases with
the increase of the number of antennas. In comparison with ZFBF and
OFDMA schemes, H-NOMA yields a significant performance gain, which is
because NOMA allows the users sharing the same spectrum and spatial
resources and hence improves the performance. Moreover, one can easily
find that the performance achieved by the proposed H-NOMA scheme is
nearly same as DPC, which is owing to the employment of IRS. 

\begin{figure}
\centering
\includegraphics[scale=0.7]{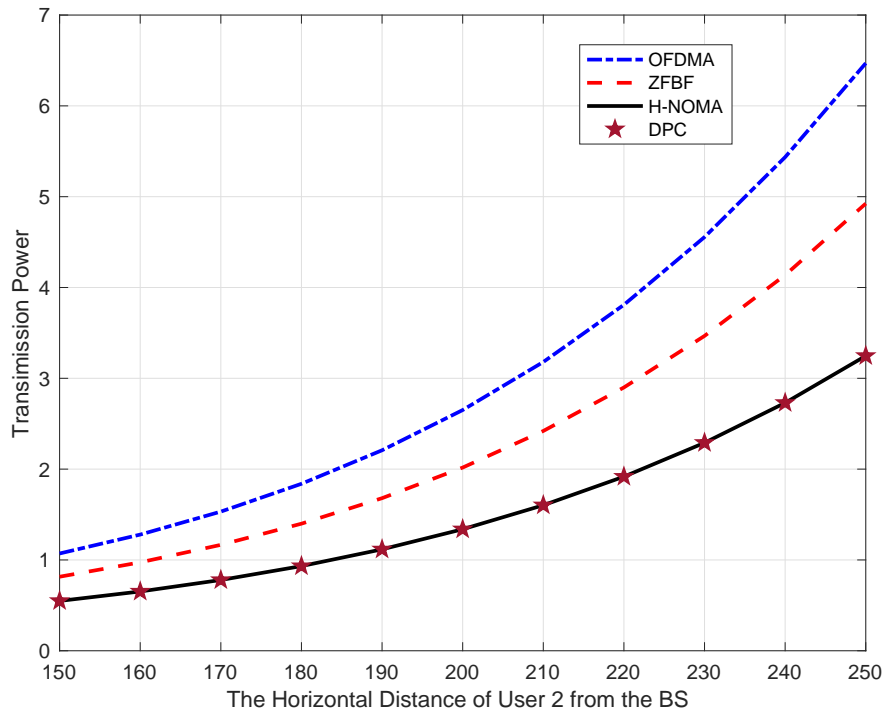}\caption{Transmission power versus the horizontal distance of user 2 from the BS.}
\end{figure}

In Fig. 5, we display the BS transmission power versus the horizontal
distance of user 2 from the BS. Here, the location of user 2 is $\left(D\textrm{m},150\textrm{m}\right)$
and $D$ is the horizontal distance of user 2 from the BS, which ranges
from $150\textrm{m}$ to $250\textrm{m}$ in the simulation. In addition,
in Fig. 5, the locations of user 1 and IRS is respectively $\left(100\textrm{m},100\textrm{m}\right)$
and $\left(20\textrm{m},20\textrm{m}\right)$, the number of antennas
is $M=4$, and the number of IRS elements is $N=10$. As can be seen
in this figure, when user 2 is farther from the BS, the BS needs to
consume more power, which is in line with our expectation. Meanwhile,
some similar phenomenon as Fig. 4 that H-NOMA performs better than
ZFBF and OFDMA and the performance achieved by H-NOMA is nearly same
as DPC. Moreover, one can easily find out that, as the distance
between the two users becomes larger, the H-NOMA scheme yields a significant
performance gain. 

\begin{figure}
\centering
\includegraphics[scale=0.7]{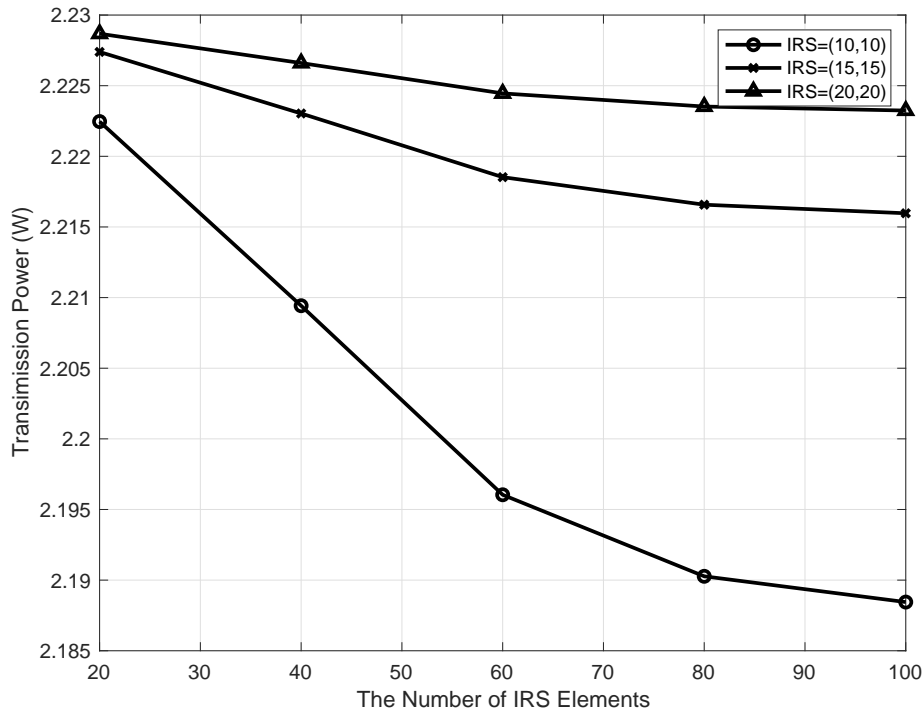}\caption{Transmission power versus the number of IRS elements.}
\end{figure}

Fig. 6 shows the transmission power versus different number of IRS
elements with different IRS locations using the provided  H-NOMA transmission scheme. 
Here, the number of antennas is $M=4$. Obviously, one can
achieve better performance by increasing $N$, which is because a larger
$N$ enables more reflecting elements to receive the signal energy from
the BS. Furthermore, if the IRS is close to the BS, a significant
performance gain can be achieved by increasing $N$. Conversely, there
has no obvious performance gain by increasing $N$ when the user is far
from the BS. Therefore, the number of IRS elements can be properly
selected according to the location of IRS.

\begin{figure}
\centering
\includegraphics[scale=0.7]{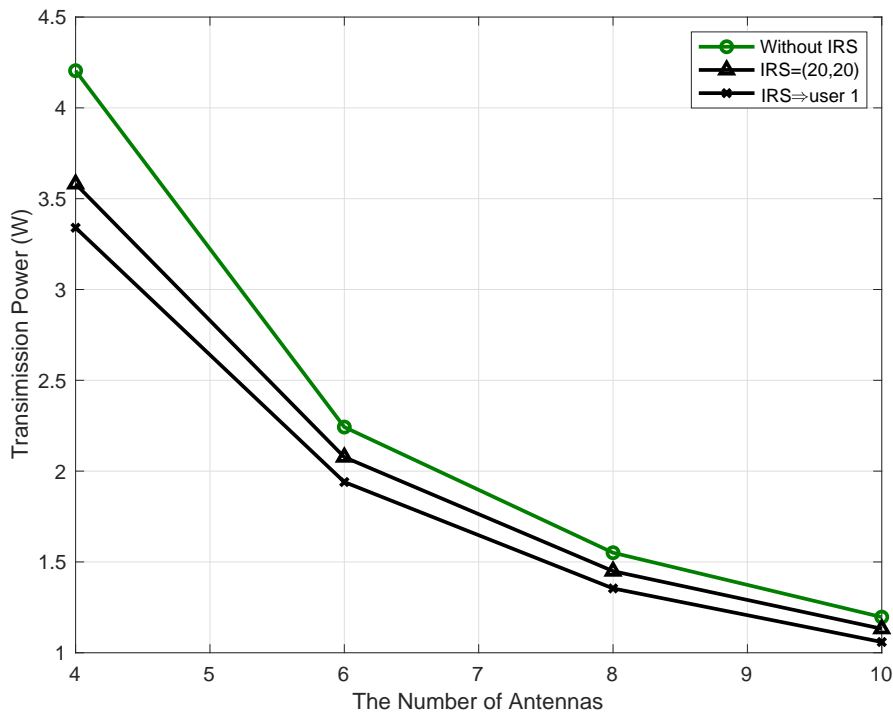}\caption{Transmission power versus the number of antennas.}
\end{figure}

In Fig. 7, the performance both in the IRS-assisted MISO NOMA scheme
and the literature MISO NOMA scheme without IRS is evaluated. The
number of IRS elements is $N=10$. In comparison with the literature
MISO NOMA, the use of IRS in MISO NOMA yields a significant performance
gain, which is because the quasi-degradation condition might be guaranteed
and hence the IRS-assisted MISO NOMA achieves the same performance
as DPC. In addition, in Fig. 7, we also present the performance with
different locations of IRS. It is easy to see that if the location
of IRS is set very close to user 1, the performance is improved. This
also reproves the Corollary \ref{C-q} via simulation. Actually, when
the location of IRS is set very close to user 1, the quasi-degradation
condition can always be satisfied and the performance region can be
obtained in the IRS-assisted MISO NOMA system. 

\section{\label{sec:Conclusion}Conclusion}

In this paper, we optimized the beamforming vectors and the IRS phase shift
matrix respectively for IRS-assisted NOMA and IRS-assisted ZFBF. For IRS-assisted NOMA, we provided the improved quasi-degradation condition under which NOMA can always achieve the same performance as DPC. Furthermore, we show that the improved quasi-degradation condition can be satisfied with greater possibility than the original quasi-degradation condition without IRS. Both for IRS-assisted NOMA and IRS-assisted ZFBF, we characterized the optimal beamforming with given IRS phase shift matrix and proposed algorithms to optimize the phase shift matrix. The IRS-assisted NOMA and the IRS-assisted ZFBF were further compared.  It is shown that, with the same IRS phase shift matrix and improved quasi-degradation condition, NOMA always outperforms ZFBF.  At the same time, we provided certain condition under which the IRS phase shift matrix can be found to generate orthogonal channels and hence the hybrid NOMA transmission scheme was proposed. Simulation results show that the IRS-assisted NOMA system not only outperforms the NOMA system without IRS but also the conventional OMA .

\appendix{}

\subsection{Proof of Proposition 1}

We consider the feasibility of the quasi-degradation constraint in
a converse way. Assume that for each possible $\mathbf{h}_{rk}$, $\mathbf{h}_{dk}$,
$r_{k}^{\min}$, $k=1,2$, and $\boldsymbol{\Theta}$, we always have
\begin{equation}
\frac{1+r_{1}^{\min}}{\cos^{2}\alpha}-\frac{r_{1}^{\min}\cos^{2}\alpha}{\left(1+r_{2}^{\min}\left(1-\cos^{2}\alpha\right)\right)^{2}}>\frac{\left\Vert \boldsymbol{h}_{1}\right\Vert ^{2}}{\left\Vert \boldsymbol{h}_{2}\right\Vert ^{2}}.\label{p-1}
\end{equation}
Since $0\leq\cos^{2}\alpha\leq1$, we should have
\begin{align}
 & \frac{1+r_{1}^{\min}}{\cos^{2}\alpha}-\frac{r_{1}^{\min}\cos^{2}\alpha}{\left(1+r_{2}^{\min}\left(1-\cos^{2}\alpha\right)\right)^{2}}\nonumber \\
\geq & \frac{1+r_{1}^{\min}}{\cos^{2}\alpha}-r_{1}^{\min}\cos^{2}\alpha>\frac{\left\Vert \boldsymbol{h}_{1}\right\Vert ^{2}}{\left\Vert \boldsymbol{h}_{2}\right\Vert ^{2}},
\end{align}
which is equivalently rewritten as 
\begin{equation}
\frac{\left(1+r_{1}^{\min}\right)\left\Vert \boldsymbol{h}_{2}\right\Vert ^{4}}{\boldsymbol{h}_{2}^{H}\boldsymbol{h}_{1}\boldsymbol{h}_{1}^{H}\boldsymbol{h}_{2}}-\frac{r_{1}^{\min}\boldsymbol{h}_{2}^{H}\boldsymbol{h}_{1}\boldsymbol{h}_{1}^{H}\boldsymbol{h}_{2}}{\left\Vert \boldsymbol{h}_{1}\right\Vert ^{4}}>1.\label{p-2}
\end{equation}
 Using \eqref{p-2} and notify $\boldsymbol{h}_{2}^{H}\boldsymbol{h}_{1}\boldsymbol{h}_{1}^{H}\boldsymbol{h}_{2}\leq\boldsymbol{h}_{1}^{H}\boldsymbol{h}_{1}\boldsymbol{h}_{2}^{H}\boldsymbol{h}_{2}$,
we then obtain

\begin{align}
 & \frac{\left(1+r_{1}^{\min}\right)\left\Vert \boldsymbol{h}_{2}\right\Vert ^{4}}{\boldsymbol{h}_{2}^{H}\boldsymbol{h}_{1}\boldsymbol{h}_{1}^{H}\boldsymbol{h}_{2}}-\frac{r_{1}^{\min}\boldsymbol{h}_{2}^{H}\boldsymbol{h}_{1}\boldsymbol{h}_{1}^{H}\boldsymbol{h}_{2}}{\left\Vert \boldsymbol{h}_{1}\right\Vert ^{4}}\nonumber \\
\geq & \frac{\left(1+r_{1}^{\min}\right)\boldsymbol{h}_{2}^{H}\boldsymbol{h}_{2}}{\boldsymbol{h}_{1}^{H}\boldsymbol{h}_{1}}-\frac{r_{1}^{\min}\boldsymbol{h}_{2}^{H}\boldsymbol{h}_{2}}{\boldsymbol{h}_{1}^{H}\boldsymbol{h}_{1}}=\frac{\boldsymbol{h}_{2}^{H}\boldsymbol{h}_{2}}{\boldsymbol{h}_{1}^{H}\boldsymbol{h}_{1}}>1,
\end{align}
implying
\begin{equation}
\boldsymbol{h}_{2}^{H}\boldsymbol{h}_{2}-\boldsymbol{h}_{1}^{H}\boldsymbol{h}_{1}>0.
\end{equation}

Let $\widetilde{\boldsymbol{v}}=\left[\boldsymbol{v};1\right]$ and
$\boldsymbol{v}=[e^{j\theta_{1}},\cdots,e^{j\theta_{N}}]^{H}$, we
can write
\begin{align}
\boldsymbol{h}_{k}^{H}\boldsymbol{h}_{k} & =\left(\mathbf{h}_{rk}^{H}\boldsymbol{\Theta}\mathbf{G}+\mathbf{h}_{dk}^{H}\right)\left(\mathbf{h}_{rk}^{H}\boldsymbol{\Theta}\mathbf{G}+\mathbf{h}_{dk}^{H}\right)^{H},\label{h_k'*h_k}\\
 & =\left(\boldsymbol{v}^{H}\boldsymbol{\Phi}_{k}+\mathbf{h}_{dk}^{H}\right)\left(\boldsymbol{v}^{H}\boldsymbol{\Phi}_{k}+\mathbf{h}_{dk}^{H}\right)^{H}\nonumber \\
 & =\boldsymbol{v}^{H}\boldsymbol{\Phi}_{k}\boldsymbol{\Phi}_{k}^{H}\boldsymbol{v}+\boldsymbol{v}^{H}\boldsymbol{\Phi}_{k}\mathbf{h}_{dk}+\mathbf{h}_{dk}^{H}\boldsymbol{\Phi}_{k}^{H}\boldsymbol{v}+\mathbf{h}_{dk}^{H}\mathbf{h}_{dk},\nonumber \\
 & =\tilde{\boldsymbol{v}}^{H}\boldsymbol{\Upsilon}_{k}\tilde{\boldsymbol{v}}.\nonumber 
\end{align}
Then, for each $\tilde{\boldsymbol{v}}$, we have 
\begin{equation}
\boldsymbol{h}_{2}^{H}\boldsymbol{h}_{2}-\boldsymbol{h}_{1}^{H}\boldsymbol{h}_{1}=\tilde{\boldsymbol{v}}^{H}\left(\boldsymbol{\Upsilon}_{2}-\boldsymbol{\Upsilon}_{1}\right)\tilde{\boldsymbol{v}}>0.
\end{equation}
Therefore,  $\boldsymbol{\Upsilon}_{2}-\boldsymbol{\Upsilon}_{1}\succ\mathbf{0}$.
Hence, one can easily obtain $\lambda_{\min}\left(\boldsymbol{\Upsilon}_{2}-\boldsymbol{\Upsilon}_{1}\right)>0$,
where $\lambda_{\min}\left(\boldsymbol{\Upsilon}_{2}-\boldsymbol{\Upsilon}_{1}\right)$
denotes the minimum eigenvalue of matrix $\boldsymbol{\Upsilon}_{2}-\boldsymbol{\Upsilon}_{1}$. 

Therefore, if $\lambda_{\min}\left(\boldsymbol{\Upsilon}_{2}-\boldsymbol{\Upsilon}_{1}\right)>0,$
the IRS phase shift matrix cannot be found to satisfy the quasi-degradation
constraint. Conversely, if $\lambda_{\min}\left(\boldsymbol{\Upsilon}_{2}-\boldsymbol{\Upsilon}_{1}\right)\leq0$,
i.e., $\lambda_{\max}\left(\boldsymbol{\Upsilon}_{1}-\boldsymbol{\Upsilon}_{2}\right)\geq0$,
the quasi-degradation constraint is always feasible. This completes
the proof. 

\subsection{Proof of Corollary 1}

Assume the IRS is set at the location of user 1, we note that $\left\Vert \mathbf{h}_{r1}\right\Vert ^{2}\rightarrow+\infty$,
and hence $\frac{\left\Vert \boldsymbol{h}_{1}\right\Vert ^{2}}{\left\Vert \boldsymbol{h}_{2}\right\Vert ^{2}}=\frac{\left\Vert \mathbf{h}_{r1}\boldsymbol{\Theta}\mathbf{G}+\mathbf{h}_{d1}\right\Vert ^{2}}{\left\Vert \mathbf{h}_{r2}\boldsymbol{\Theta}\mathbf{G}+\mathbf{h}_{d2}\right\Vert ^{2}}\rightarrow+\infty$.
In addition, the function 
\begin{equation}
L\left(\cos^{2}\alpha\right)=\frac{1+r_{1}^{\min}}{\cos^{2}\alpha}-\frac{r_{1}^{\min}\cos^{2}\alpha}{\left(1+r_{2}^{\min}\left(1-\cos^{2}\alpha\right)\right)^{2}}
\end{equation}
is monotonically decreasing function of $\cos^{2}\alpha$. The supposition
$\boldsymbol{h}_{1}^{H}\boldsymbol{h}_{2}\neq0$ implies that $\cos^{2}\alpha\geq\kappa>0$.
Therefore,  $L\left(\cos^{2}\alpha\right)\leq L\left(\kappa\right)<+\infty$,
this in turn implies that the quasi-degradation condition, \eqref{quasi-degradation}, 
can always be satisfied. 

\subsection{Proof of Proposition 2}

Since $\boldsymbol{h}_{k}\boldsymbol{h}_{k}^{H}=\boldsymbol{\chi}_{k}\boldsymbol{Q}\boldsymbol{\chi}_{k}^{H}$
for $k=1,2$, using condition \eqref{quasi-transfer}, we always have
\begin{equation}
\boldsymbol{h}_{1}\boldsymbol{h}_{1}^{H}-\left(1+r_{1}^{\min}\right)\boldsymbol{h}_{2}\boldsymbol{h}_{2}^{H}\succeq\boldsymbol{0},
\end{equation}
and hence
\begin{equation}
\boldsymbol{h}_{2}^{H}\left(\boldsymbol{h}_{1}\boldsymbol{h}_{1}^{H}-\left(1+r_{1}^{\min}\right)\boldsymbol{h}_{2}\boldsymbol{h}_{2}^{H}\right)\boldsymbol{h}_{2}\geq0,\label{proof-p4-1}
\end{equation}
which is equivalently written as 
\begin{equation}
\frac{1+r_{1}^{\min}}{\cos^{2}\alpha}\leq\frac{\left\Vert \boldsymbol{h}_{1}\right\Vert ^{2}}{\left\Vert \boldsymbol{h}_{2}\right\Vert ^{2}}.\label{1}
\end{equation}
In addition, according to the quasi-degradation condition \eqref{quasi-degradation},
with $0\leq\cos^{2}\alpha\leq1$, we have
\begin{equation}
\frac{1+r_{1}^{\min}}{\cos^{2}\alpha}-\frac{r_{1}^{\min}\cos^{2}\alpha}{\left(1+r_{2}^{\min}\left(1-\cos^{2}\alpha\right)\right)^{2}}\leq\frac{1+r_{1}^{\min}}{\cos^{2}\alpha}.\label{2}
\end{equation}
Combining \eqref{1} and \eqref{2}, we have 
\begin{equation}
\frac{1+r_{1}^{\min}}{\cos^{2}\alpha}-\frac{r_{1}^{\min}\cos^{2}\alpha}{\left(1+r_{2}^{\min}\left(1-\cos^{2}\alpha\right)\right)^{2}}\leq\frac{\left\Vert \boldsymbol{h}_{1}\right\Vert ^{2}}{\left\Vert \boldsymbol{h}_{2}\right\Vert ^{2}},
\end{equation}
which is in fact the quasi-degradation constraint shown in \eqref{quasi_degradation_P7-1-1}. 

\subsection{Proof of Proposition 3}

Since $0\leq\cos^{2}\alpha\leq1$, we have 
\begin{align}
F & =\phi_{1}^{2}\left(\left(1+r_{2}^{\min}\right)^{2}-\left(2+r_{2}^{\min}\right)r_{2}^{\min}\cos^{2}\alpha\right)+\phi_{2}^{2},\nonumber \\
 & =\frac{\sigma^{2}\left(r_{1}^{\min}\left(1+r_{2}^{\min}\right)\right)}{\textrm{ Tr}\left(\boldsymbol{Q}\boldsymbol{\Upsilon}_{1}\right)}\frac{1}{1+r_{2}^{\min}\left(1-\cos^{2}\alpha\right)}+\frac{\sigma^{2}r_{2}^{\min}}{\textrm{ Tr}\left(\boldsymbol{Q}\boldsymbol{\Upsilon}_{2}\right)},\nonumber \\
 & \leq\frac{\sigma^{2}\left(r_{1}^{\min}\left(1+r_{2}^{\min}\right)\right)}{\textrm{ Tr}\left(\boldsymbol{Q}\boldsymbol{\Upsilon}_{1}\right)}+\frac{\sigma^{2}r_{2}^{\min}}{\textrm{ Tr}\left(\boldsymbol{Q}\boldsymbol{\Upsilon}_{2}\right)},
\end{align}
which completes the proof.

\subsection{Proof of Lemma 3}

By using the least square property of Moore-Penrose inverse \cite{horn2012matrix},
the optimal solution to problem \eqref{P1-ZF} with given $\boldsymbol{\Theta}$
is trivially obtained. Specifically, let $\boldsymbol{H}=\left[\boldsymbol{h}_{1},\boldsymbol{h}_{2}\right]$,
the optimal solution is given as
\begin{align}
\left[\boldsymbol{w}_{1}^{*},\boldsymbol{w}_{2}^{*}\right] & =\boldsymbol{H}^{H\dagger}\left[\begin{array}{cc}
\sqrt{r_{1}^{\min}} & 0\\
0 & \sqrt{r_{2}^{\min}}
\end{array}\right]\nonumber \\
 & =\frac{\left[\sqrt{r_{1}^{\min}}\boldsymbol{a}_{1},\sqrt{r_{2}^{\min}}\boldsymbol{a}_{2}\right]}{\left\Vert \boldsymbol{h}_{1}\right\Vert ^{2}\left\Vert \boldsymbol{h}_{2}\right\Vert ^{2}\sin^{2}\alpha},
\end{align}
where $\dagger$ represents the Moore-Penrose inverse. 

\subsection{Proof of Theorem 1}

Note that 
\begin{equation}
\frac{r_{2}^{\min}\sigma^{2}}{\left\Vert \boldsymbol{h}_{2}\right\Vert ^{2}}\leq\frac{\sigma^{2}}{\sin^{2}\alpha}\frac{r_{2}^{\min}\sigma^{2}}{\left\Vert \boldsymbol{h}_{2}\right\Vert ^{2}},
\end{equation}
and 
\begin{align}
\frac{r_{1}^{\min}\left(1+r_{2}^{\min}\right)\sigma^{2}}{\left\Vert \boldsymbol{h}_{1}\right\Vert ^{2}\left(1+r_{2}^{\min}\sin^{2}\alpha\right)} & \leq\frac{r_{1}^{\min}\left(1+r_{2}^{\min}\right)\sigma^{2}}{\left\Vert \boldsymbol{h}_{1}\right\Vert ^{2}\left(\sin^{2}\alpha+r_{2}^{\min}\sin^{2}\alpha\right)}\nonumber \\
 & \leq\frac{r_{1}^{\min}\sigma^{2}}{\left\Vert \boldsymbol{h}_{1}\right\Vert ^{2}\sin^{2}\alpha}.
\end{align}
Therefore, 
\begin{align}
P^{NOMA} & =\frac{r_{1}^{\min}\left(1+r_{2}^{\min}\right)\sigma^{2}}{\left\Vert \boldsymbol{h}_{1}\right\Vert ^{2}\left(1+r_{2}^{\min}\sin^{2}\alpha\right)}+\frac{r_{2}^{\min}\sigma^{2}}{\left\Vert \boldsymbol{h}_{2}\right\Vert ^{2}}\nonumber \\
 & \leq\frac{r_{1}^{\min}\sigma^{2}}{\left\Vert \boldsymbol{h}_{1}\right\Vert ^{2}\sin^{2}\alpha}+\frac{\sigma^{2}}{\sin^{2}\alpha}\frac{r_{2}^{\min}\sigma^{2}}{\left\Vert \boldsymbol{h}_{2}\right\Vert ^{2}}\nonumber \\
 & =\frac{\sigma^{2}}{\sin^{2}\alpha}\left(\frac{r_{1}^{\min}}{\left\Vert \boldsymbol{h}_{1}\right\Vert ^{2}}+\frac{r_{2}^{\min}}{\left\Vert \boldsymbol{h}_{2}\right\Vert ^{2}}\right)=P^{ZFBF}.
\end{align}

\subsection{Proof of Proposition 4}

Given $\boldsymbol{h}_{1}^{H}\boldsymbol{h}_{2}=0$, we have $\cos^{2}\alpha=0$
and hence
\[
\frac{1+r_{1}^{\min}}{\cos^{2}\alpha}-\frac{r_{1}^{\min}\cos^{2}\alpha}{\left(1+r_{2}^{\min}\left(1-\cos^{2}\alpha\right)\right)^{2}}\rightarrow+\infty,
\]
which implying the quasi-degradation condition in \eqref{quasi-degradation}
is invalid. 

Here, we prove that $\boldsymbol{h}_{1}^{H}\boldsymbol{h}_{2}=0$
can be satisfied if and only if $\boldsymbol{R}=\boldsymbol{0}$.
First, we prove the necessary condition. If $\boldsymbol{h}_{1}^{H}\boldsymbol{h}_{2}=0$,
i.e., the channel vectors of the two users are orthogonal, we have
\begin{equation}
\boldsymbol{h}_{1}^{H}\boldsymbol{h}_{2}=\widetilde{\boldsymbol{v}}^{H}\boldsymbol{R}\widetilde{\boldsymbol{v}}=0.\label{h1'*h2-1}
\end{equation}
Since $\widetilde{\boldsymbol{v}}\neq\boldsymbol{0}$, \eqref{Q=00003D0}
implies that $\boldsymbol{R}$ is a zero matrix, i.e., condition \eqref{Q=00003D0}.

We then prove the sufficiency. Given $\boldsymbol{R}=\boldsymbol{0}$,
we have $\boldsymbol{v}^{H}\boldsymbol{R}\boldsymbol{v}=0$, for any
possible $\boldsymbol{v}$. In addition, from \eqref{h1'*h2-1}, one
can easily find out that 
\begin{equation}
\boldsymbol{v}^{H}\boldsymbol{R}\boldsymbol{v}=\boldsymbol{h}_{1}^{H}\boldsymbol{h}_{2}=0.
\end{equation}

From Lemma \ref{L-ZF-B}, if $\boldsymbol{h}_{1}^{H}\boldsymbol{h}_{2}=0$,
the minimum transmission power of the BS achieved by using ZFBF is
\begin{equation}
p=\sigma^{2}\left(\frac{r_{1}^{\min}}{\left\Vert \boldsymbol{h}_{1}\right\Vert ^{2}}+\frac{r_{2}^{\min}}{\left\Vert \boldsymbol{h}_{2}\right\Vert ^{2}}\right),
\end{equation}
 which also satisfies
\begin{equation}
p\leq P^{ZFBF}=\frac{\sigma^{2}}{\sin^{2}\alpha}\left(\frac{r_{1}^{\min}}{\left\Vert \boldsymbol{h}_{1}\right\Vert ^{2}}+\frac{r_{2}^{\min}}{\left\Vert \boldsymbol{h}_{2}\right\Vert ^{2}}\right),\label{G<p_zfbf}
\end{equation}
and 
\begin{equation}
p\leq P^{NOMA}=\frac{r_{1}^{\min}\left(1+r_{2}^{\min}\right)\sigma^{2}}{\left\Vert \boldsymbol{h}_{1}\right\Vert ^{2}\left(1+r_{2}^{\min}\sin^{2}\alpha\right)}+\frac{r_{2}^{\min}\sigma^{2}}{\left\Vert \boldsymbol{h}_{2}\right\Vert ^{2}}.\label{g<p_noma}
\end{equation}
From \eqref{G<p_zfbf} and \eqref{g<p_noma}, one can easily find
that if the channels of user 1 and user 2 are orthogonal, the ZFBF
scheme obtains the optimal performance. 

\bibliographystyle{IEEEtran}
\bibliography{MISO_NOMA_LIS}

\end{document}